\newcommand{\Del}{\nabla}
\newcommand{\del}{\partial}
\renewcommand{\tilde}{\widetilde}
\renewcommand{\bar}{\overline}
\newcommand{\reals}{\mathbb{R}}
\newcommand{\comps}{\mathbb{C}}
\newcommand{\A}{\mathcal{A}}
\newcommand{\M}{\mathcal{M}}
\renewcommand{\H}{\mathcal{H}}
\renewcommand{\hat}{\widehat}
\newcommand{\supp}{\operatorname{supp}}
\newtheorem{theorem}{Theorem}
\newtheorem{definition}[theorem]{Definition}
\theoremstyle{definition}
\numberwithin{theorem}{section}
\title{Analyticity and the Unruh effect: a study of local modular flow}
\author{Jonathan Sorce}
\affiliation{Center for Theoretical Physics, Massachusetts Institute of Technology, 182 Memorial Drive, Cambridge, MA, USA}
\abstract{
The Unruh effect can be formulated as the statement that the Minkowski vacuum in a Rindler wedge has a boost as its modular flow.
In recent years, other examples of states with geometrically local modular flow have played important roles in understanding energy and entropy in quantum field theory and quantum gravity.
Here I initiate a general study of the settings in which geometric modular flow can arise, showing (i) that any geometric modular flow must be a conformal symmetry of the background spacetime, and (ii) that in a well behaved class of ``weakly analytic'' states, geometric modular flow must be future-directed.
I further argue that if a geometric transformation is conformal but not isometric, then it can only be realized as modular flow in a conformal field theory.
Finally, I discuss a few settings in which converse results can be shown --- i.e., settings in which a state can be constructed whose modular flow reproduces a given vector field.
}
\begin{document}
\maketitle

\section{Introduction}

In 1975, Hawking made the remarkable discovery that semiclassical black holes evaporate by emitting thermal radiation \cite{hawking1975particle}.
In closely related work, it was soon discovered that for quantum fields propagating in the background of an eternal Schwarzschild black hole, there is a natural ``vacuum state'' that appears thermal with respect to Schwarzschild time \cite{hartle1976path}.
Around the same time, Unruh showed that a similar phenomenon --- the ``Unruh effect'' --- occurs even in Minkowski spacetime, where observers following the trajectory of a boost will experience the global vacuum as a thermal bath \cite{unruh1976notes, unruh1984acceleration}.
It was eventually realized that a version of the Unruh effect is present in a large class of spacetimes with Killing symmetry \cite{Sewell:1982zz, Kay:1988mu, jacobson1994note}.
In all of these settings, the surprise is the emergence of thermodynamic behavior in non-equilibrium states.
These states are globally pure, but they appear thermal to a preferred class of observers.

A precise way of formulating what it means for a state to have ``emergent thermodynamics'' is provided by Tomita-Takesaki modular theory.
As explained in \cite{Sorce:modular} (see also the textbook \cite{Takesaki:volII}), for \textit{any} state $|\Psi\rangle$ in quantum field theory, and any causally complete spacetime region $A$, if $|\Psi\rangle$ is sufficiently entangled between $A$ and its spatial complement, then there is a \textit{unique} self-adjoint operator $K_{\Psi},$ called the modular Hamiltonian, such that the subsystem $A$ looks thermal in the state $|\Psi\rangle$ with respect to the time-evolution map $e^{- i K_{\Psi} t}.$\footnote{The statement that a subsystem ``looks thermal'' in a state is made precise using a technical tool called the KMS condition; see \cite[section 3]{Sorce:modular}, or section \ref{subsec:modular-flow} of the present paper.}
The Unruh effect was first formulated in this language by Bisognano and Wichmann \cite{bisognano1976duality}, who showed that in the framework of axiomatic quantum field theory, the modular Hamiltonian of the vacuum state restricted to a Rindler wedge is proportional to the generator of boosts in that wedge.
This is illustrated in figure \ref{fig:unruh-effect}.

\begin{figure}
	\centering
	\includegraphics[scale=1.2]{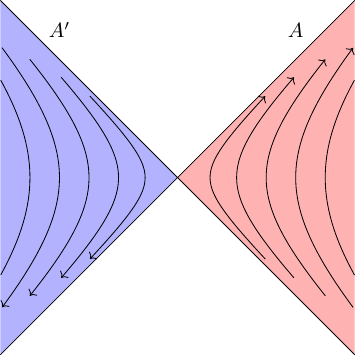}
	\caption{The Bisognano-Wichmann formulation of the Unruh effect.
	The modular flow of the Minkowski vacuum $|\Omega\rangle,$ restricted to a Rindler wedge, is the Lorentz boost generated by $\xi^a = 2 \pi \left[x \left( \frac{\partial}{\partial t} \right)^a + t \left(\frac{\partial}{\partial x} \right)^a\right]$.}
	\label{fig:unruh-effect}
\end{figure}

In this language, we say that a state in quantum field theory exhibits emergent non-equilibrium thermodynamics if there is a subregion of spacetime in which its modular flow is geometrically local.
Every sufficiently entangled state has \textit{some} Hamiltonian with respect to which it looks thermal; the cases of greatest interest are the ones where that Hamiltonian describes a physical family of observers.

There are a few known examples of states in various quantum field theories, in various spacetimes, where modular flow is geometrically local.
Beyond the settings described above, it is also known that modular flow is geometrically local in (i) the vacuum of a conformal field theory (CFT) restricted to a ball in a conformally flat spacetime \cite{Hislop:CFT-ball, Casini:2011kv, Frob:2023hwf}; (ii) a global thermal state in two-dimensional CFT restricted to a Rindler wedge in Minkowski spacetime \cite{Borchers:2D}; and (iii) a global thermal state in two-dimensional CFT restricted to an interval in Minkowski spacetime \cite{Wong:2013gua, Cardy:modhams}.

In addition to their conceptual interest, states with geometric modular flow have found many remarkable applications.
They have been used to study the universal entanglement structure of subregions in quantum field theory \cite{Hislop:CFT-ball, Fredenhagen:1984dc, Hollands:edge}, to compute entanglement quantities explicitly in CFT \cite{Cardy:modhams}, to prove entropic versions of renormalization monotonicity theorems \cite{Casini:c-theorem, Casini:F-theorem, Casini:a-theorem}, and even to constrain the complexity hierarchy of bulk reconstruction in holography \cite{Engelhardt:pythons, Chen:2022eyi}.
Most recently, states with geometric modular flow have been used to argue that a perturbative treatment of diffeomorphism invariance in quantum gravity leads to finite renormalized entropy differences for static anti-de Sitter black holes \cite{Witten:crossed-product, Chandrasekaran:large-N, Penington:JT}, the static patch of de Sitter spacetime \cite{Chandrasekaran:dS}, and more general spacetime regions \cite{Jensen:2023yxy, Kudler-Flam:2023qfl}.\footnote{While they will not be discussed in this paper, there are also states with modular flow that is instantaneously local on a slice \cite{Wall:second-law, Faulkner:ANEC, Casini:2011kv} or globally bilocal \cite{Casini:chiral, Longo:2009mn, Hollands:chiral,  Blanco:2019xwi, Fries:2019ozf, Mintchev:2020uom, Mintchev:2020jhc}.
Both of these kinds of states have found interesting applications; see e.g. \cite{Ceyhan:QNEC, Chen:2019iro, Leutheusser:HSMT, Jensen:2023yxy, Kudler-Flam:2023qfl}.}

Given the broad usefulness of states with geometrically local modular flow, it is of considerable interest to understand these states in a general sense without needing to treat them on a case-by-case basis.
The purpose of this paper is to initiate this general study by placing precise constraints on the settings in which local modular flow can arise.
I emphasize that the main results of this paper are \textit{constraints} --- statements of the form ``whenever local modular flow exists, it must have property X'' --- rather than constructive statements about the existence of local flow in certain settings.
A complete characterization of geometric modular flow would be an interesting subject for future work, and the techniques introduced here --- in particular, the analyticity techniques used in section \ref{sec:future-directed} --- seem well suited for future investigations.

The plan of the paper follows.

\begin{itemize}
	\item In \hyperref[sec:assumptions]{\textbf{section 1.1}}, I list my choices of convention along with some assumptions that are made to simplify the presentation of the rest of the paper.
	In particular, I explain that all formulas will be expressed for theories generated by a single scalar field, but indicate how to generalize to arbitrary theories.
	\item In \hyperref[sec:toolkit]{\textbf{section 2}}, I give a brief, pedagogical introduction to the mathematical tools that are needed in the rest of the paper.
	I give the basic definitions and properties of modular flow, explain the connection between analyticity and positive energy in quantum field theory, and describe a powerful technique --- the ``disk method'' --- for analytically continuing functions of multiple complex variables.
	\item In \hyperref[sec:constraints]{\textbf{section 3}}, I determine a set of necessary conditions for a geometric flow in spacetime to be the modular flow of a state.
	\begin{itemize}
		\item In \hyperref[sec:conformality]{\textbf{section 3.1}}, I show that due to microcausality, any unitary group which acts geometrically must implement a conformal transformation, implying that geometric modular flow must be generated by a conformal Killing vector field.\footnote{This constraint is known to some practitioners of modular theory --- for example, a special case appears in \cite{borchers2000revolutionizing}, and a few people have indicated to me via private communication that they are aware of the general constraint --- but to my knowledge the details have not been worked out in the literature for general theories and general spacetimes.
		See however \cite[section 2]{Chen:2022eyi} where a similar claim is made for states that are prepared by Euclidean path integrals.}
		\item In \hyperref[sec:future-directed]{\textbf{section 3.2}}, I rule out non-future-directed conformal transformations as candidates for modular flow in a class of states that includes the ``analytic states'' discussed by Strohmaier and Witten in the context of the timelike tube theorem \cite{Strohmaier:analytic-states, Strohmaier:timelike-tube}.
		I also present a different way of thinking about analytic states than the one given in \cite{Strohmaier:analytic-states, Strohmaier:timelike-tube}, skimming over the thorny details of microlocal analysis to explain an analytic state as one for which nonanalytic singularities can be resolved by an $i\epsilon$ prescription with the same sign as the one used for vacuum correlators in Minkowski spacetime.
		The proof that geometric modular flow is future-directed in my well behaved class of states is accomplished by microlocalizing a result due to Trebels for the Minkowski vacuum \cite{Trebels}.
		\item In \hyperref[sec:isometric]{\textbf{section 3.3}}, I argue against the possibility of having a non-isometric conformal transformation as the modular flow of a state in a non-conformal field theory.
		Unlike the arguments of sections \ref{sec:conformality} and \ref{sec:future-directed}, the argument of section \ref{sec:isometric} is not rigorous, and is included as a suggestive direction for future work.
	\end{itemize}
	\item In \hyperref[sec:sufficiency-construction]{\textbf{section 4}}, I explain some settings in which path integrals can be used to construct a state whose modular flow matches a given geometric transformation of spacetime.
	\item In \hyperref[sec:discussion]{\textbf{section 5}}, I comment on three unsolved problems that I think will be important in the future: (i) the application of analytic techniques to non-analytic spacetimes, (ii) the scaling of local modular flow near the edge of a region, and (iii) the construction of states with local modular flow in the absence of a ``Euclidean section.''
\end{itemize}

\noindent 
Appendix \ref{app:axis-extension} explains a family of contours in higher-dimensional complex spaces that are used to apply the disk method (section \ref{subsec:disk-method}) to constraining modular flow in weakly analytic states (section \ref{sec:future-directed}).

\vspace{0.5cm}
\hrule
\hrule
\vspace{0.5cm}

\noindent
\textbf{Note added (April 23, 2025):} About one year after this paper first appeared online, Rainer Verch informed me of overlapping work in the mathematical physics literature that reaches similar conclusions using different assumptions or techniques.
\begin{itemize}
	\item In \cite{Keyl:1993ye}, Keyl studied geometrically local ``net automorphisms'' of C$^*$ algebras and concluded that any such automorphism must be conformally implemented.
	This is similar to the conclusion of section \ref{sec:conformality}, but Keyl's proof assumes that field operators commute \textit{only} at spacelike separation.
	This is a stronger assumption than the one made in section \ref{sec:assumptions}, and consequently Keyl's proof does not apply, for example, to flat-spacetime conformal field theories in two dimensions.
	\item 
	In \cite{Pinamonti:2018ltu}, Pinamonti, Sanders, and Verch studied connections between the smooth wave front set and the existence of spacelike-directed geometric flows satisfying the KMS condition.
	Their technique is similar to the one outlined in section \ref{sec:conclusion-analyticity}. 
	In some sense their assumptions should apply in a larger class of spacetimes than our assumptions, because they use only the smooth wave front set, instead of the analytic wave front set that we use in section \ref{sec:analytic-states}.
	On the other hand, while we assume that the analytic wave front set must \textit{not} contain certain analytic directions, the authors of \cite{Pinamonti:2018ltu} assume that the smooth wave front set \textit{must} contain certain singular directions.
	The relationship between the smooth wave front set $\text{WF}_{\text{S}}$ and the analytic wave front set $\text{WF}_{\text{A}}$ is
	\begin{equation} \label{eq:wavefront-inclusion}
		\text{WF}_{\text{S}} \subseteq \text{WF}_{\text{A}}.
	\end{equation}
	Roughly speaking, there is a set $K$ such that we assume $\text{WF}_{\text{A}} \subseteq K,$ and such that \cite{Pinamonti:2018ltu} assumes $K \subseteq \text{WF}_{\text{S}}.$
	Because of the direction of the inclusion in \eqref{eq:wavefront-inclusion}, these assumptions are incomparable.
	It would be very interesting to undertake a closer analysis of the connections between these two works.
\end{itemize}
\hrule
\hrule

\subsection{Assumptions and conventions}
\label{sec:assumptions}

In my first attempt to write this paper, I stated everything in the greatest generality possible.
This ended up causing a lot of bloat in the explanations without providing any real insight.
To make the paper as clean and informative as possible, I have therefore elected to make a few universal assumptions:
\begin{itemize}	
	\item Every field theory in this paper is relativistic.
	\item I will assume that every spacetime is globally hyperbolic, i.e., that it admits a complete Cauchy slice.
	It should be understood, however, that all results generalize to spacetimes that are not globally hyperbolic but that are still causally well behaved, like anti-de Sitter spacetime.
	\item I will only discuss field theories that are generated by a single real scalar field $\phi(x).$
	There is no problem generalizing to other kinds of field theories.
	All results of this paper apply to field theories satisfying the assumption
	\begin{quote}
		\textit{for any pair of null-separated points $x$ and $y$, there exists a pair of observable fields $\phi$ and $\psi$ with $[\phi(x), \psi(y)] \neq 0.$}
	\end{quote}
	The advantage of restricting our attention to theories generated by a single real scalar field is that we can replace this with the simpler assumption
	\begin{quote}
		\textit{for any pair of null-separated points $x$ and $y$, we have $[\phi(x), \phi(y)] \neq 0.$}
	\end{quote}
	However, all arguments in the paper generalize in a straightforward way to field theories satisfying the more general assumption.
\end{itemize}

\noindent
In addition to the above assumptions, there are several notational conventions that must be settled, and my choices are as follows:
\begin{itemize}
	\item The metric signature is ``mostly pluses,'' so that timelike vectors have negative norm-squared.
	\item The dimension of spacetime is $d+1$.
	\item In Minkowski spacetime, the global momentum operator $P_{\mu}$ is defined so that the operator $e^{-i P_{\mu} \xi^{\mu}}$ translates states by the vector $\xi^{\mu}.$
	With this convention, if $x^0$ is a time coordinate, then the Hamiltonian that evolves in the direction $x^0$ is the lowered-index operator $P_0,$ \textit{not} the raised-index operator $P^0,$ which differs from $P_0$ by a sign.
\end{itemize}

\noindent
Finally, I will frequently refer to objects like $\langle \phi(x_1) \phi(x_2) \rangle$ or $\phi(x_1) \phi(x_2) | \Psi \rangle$ as ``functions'' of $x_1$ and $x_2,$ even though they are only defined as distributions.

\section{Setting up the toolkit}
\label{sec:toolkit}

This section describes the basic mathematical tools needed in the rest of the paper.
Section \ref{subsec:modular-flow} explains the algebraic approach to studying quantum fields, and the basics of modular flow.
Section \ref{subsec:field-analyticity} explains the relationship between positivity of energy and analyticity of position-space field correlators in Minkowski spacetime.
Section \ref{subsec:disk-method} explains a tool in multivariable complex analysis that can be used to analytically continue functions defined in a small domain of $\comps^n$ to a much larger one; it can be skipped by readers willing to accept the results of section \ref{sec:future-directed} without understanding the details of the proof.

\subsection{Modular flow in quantum field theory}
\label{subsec:modular-flow}

Let $\M$ be a globally hyperbolic spacetime, and let $\H$ be a Hilbert space describing a sector of a quantum field theory on $\M.$
As explained in section \ref{sec:assumptions}, I will write all expressions as though the field theory is generated by a single scalar field $\phi(x)$, though this does not affect the generality of the stated results.

Given any smooth, compactly supported function $f,$ there is an associated operator
\begin{equation}
	\phi[f]
	= \int d^{d+1} x \sqrt{|g|} \phi(x) f(x).
\end{equation}
This is sometimes called a \textit{smeared field operator} or just a \textit{field operator}, with \textit{smearing function} $f.$
Field operators are unbounded; it is usually easier to work with bounded functions of the field operators, for example $e^{i\phi[f]}$ when $\phi[f]$ is Hermitian.
Corresponding to any spacetime region, there is an algebra of bounded operators consisting of polynomials of bounded functions of the smeared fields in that region.
Taking the completion of this algebra in an appropriate topology gives a von Neumann algebra.\footnote{A review of von Neumann algebras for physicists can be found in \cite{Sorce:types}.}

Now consider a region $A$ which is the domain of dependence of a partial Cauchy slice, and let $A'$ be its causal complement.
This is sketched in figure \ref{fig:partial-Cauchy}.
Let $\A$ be the von Neumann algebra generated by fields in $A$.
We will assume Haag duality, which is the statement that the von Neumann algebra $\A'$ generated by fields in $A'$ contains all operators that commute with $\A$.
In this setting, when $f$ is a smearing function supported in $A$, the field operator $\phi[f]$ is said to be \textit{affiliated} with $\A$ --- it is not strictly ``in'' $\A$ because it is unbounded, but it commutes with all operators in $\A'.$

\begin{figure}
	\centering
	\includegraphics[scale=1.25]{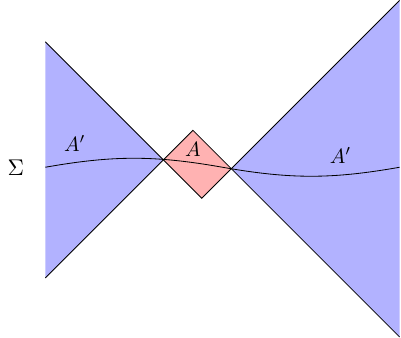}
	\caption{A Cauchy slice $\Sigma$ together with the domain of dependence $A$ for a partial Cauchy slice, and the complementary domain $A'.$}
	\label{fig:partial-Cauchy}
\end{figure}

Let $|\Psi\rangle$ be a state that is \textit{cyclic and separating} for $\A$, meaning that the set $\{a |\Psi\rangle\}_{a \in \A}$ is dense in $\H,$ and that $|\Psi\rangle$ is not annihilated by any nonzero operator in $\A$.
This is a notion that is well defined in the continuum, and which generalizes the concept of a lattice state having full rank across a bipartition; a state that is cyclic and separating for $\A$ may be thought of as being ``fully entangled'' between $A$ and $A'$.
Given a cyclic and separating state $|\Psi\rangle$, one can show that there exists a \textit{unique} self-adjoint operator $K_{\Psi}$, called the modular Hamiltonian of $|\Psi\rangle$ with respect to $\A$, satisfying the following three properties:
\begin{enumerate}[(i)]
	\item Modular flow preserves the state: $e^{- i K_{\Psi} t} |\Psi \rangle = |\Psi \rangle.$
	\item Modular flow preserves the algebras: For $a \in \A$ and $a' \in \A',$ we have
	\begin{equation}
		e^{i K_{\Psi} t} a e^{-i K_{\Psi} t} \in \A, \qquad e^{i K_{\Psi} t} a' e^{-i K_{\Psi} t} \in \A'.
	\end{equation}
	\item Modular flow satisfies the KMS condition: for $a, b \in \A,$ the time-evolved correlator $\langle \Psi | e^{i K_{\Psi} t} a e^{-i K_{\Psi} t} b |\Psi\rangle$ admits an analytic continuation $it \mapsto z$ for $z$ in the strip $0 \leq \text{Re}(z) \leq 1.$
	The analytic continuation is given by a function $F(z)$ satisfying
	\begin{align}
		F(it)
		& = \langle \Psi | e^{i K_{\Psi} t} a e^{-i K_{\Psi} t} b |\Psi\rangle, \\
		F(1+it)
		& = \langle \Psi | b e^{i K_{\Psi} t} a e^{-i K_{\Psi} t} |\Psi\rangle.
	\end{align}
\end{enumerate}
The existence and uniqueness of $K_{\Psi}$ is the fundamental theorem of Tomita-Takesaki theory, explained in detail in the textbook \cite{Takesaki:volII}.
A recent reformulation of the theory for physicists is given in \cite{Sorce:modular}.
As explained in \cite{Sorce:modular}, the KMS condition (condition (iii) above) is satisfied by thermal states with respect to a local Hamiltonian on a lattice, and is the appropriate continuum generalization of what it means for a state to ``look thermal'' with respect to a choice of time-evolution map.

For the purposes of this paper, the most important feature of modular flow is its domain of analytic continuation.
Given an operator $a \in \A$, the state $a |\Psi\rangle$ is in the domain of every unbounded operator $e^{- z K_{\Psi}}$ for $0 \leq \text{Re}(z) \leq \frac{1}{2},$ and the function
\begin{equation}
	z \mapsto e^{- z K_{\Psi}} a |\Psi \rangle
\end{equation}
is holomorphic in that strip.
From this, one can easily show that the bra-valued function
\begin{equation}
	z \mapsto \langle \Psi| a e^{- z K_{\Psi}}
\end{equation}
is defined and holomorphic in the same strip.
In fact, these statements hold not just when $a$ is a bounded operator in $\A$, but also when $a$ is unbounded provided that both $a$ and its adjoint are affiliated with $\A$.
In particular, analytic continuations of this kind exist when $a$ is a smeared field operator.
For a more detailed explanation, see \cite[section 3 and appendix B]{Sorce:modular}.

\subsection{Fields, energy, and analyticity}
\label{subsec:field-analyticity}

In section \ref{sec:future-directed}, I will argue that in any QFT state that is analytically well behaved, geometric modular flow must point to the future.
The proof is inspired by results in axiomatic quantum field theory from the 1960s and 1970s, reviewed e.g. in \cite{streater2000pct}.
The details of axiomatic quantum field theory will not so important in this paper, but it will be important to understand the big conceptual lesson from that framework: that in quantum field theory, there is a deep relationship between the positivity of energy and the analytic structure of correlation functions, and that this analytic structure has physical implications.

Axiomatic quantum field theory is traditionally formulated for excitations around the vacuum state $|\Omega\rangle$ in Minkowski spacetime.
While the correlation functions
\begin{equation}
	\langle \Omega | \phi(x_1) \dots \phi(x_n) | \Omega\rangle
\end{equation}
cannot be treated rigorously as functions due to singularities, it is assumed that they can be treated rigorously as distributions, meaning that for any compactly supported smooth functions $f_1, \dots, f_n : \reals^{d+1} \to \comps,$ the smeared correlator
\begin{equation}
	\langle \Omega | \phi[f_1] \dots \phi[f_n] | \Omega\rangle
\end{equation}
depends linearly and continuously on the smearing functions $f_j$, with continuity defined in some appropriate sense.

The only other assumption we will need from the axiomatic literature is the spectrum condition, which is the statement that the vacuum is the state of lowest energy in any inertial reference frame.
To formulate this precisely, one first assumes the existence of a momentum operator $P_\mu$ that generates translations, so that for any vector $X^{\mu}$, there is a unitary operator $e^{-i P_{\mu} X^{\mu}}$ generating a translation by $X^{\mu}$.
The vacuum is assumed to be the unique state that is invariant under all translations, and it is further assumed that if $X^\mu$ is future-directed, then the operator $P_\mu X^\mu$ --- which is the Hamiltonian for the ``time direction'' $X^\mu$ --- is nonnegative.

The spectrum condition has remarkable consequences for the analytic properties of correlation functions.
The rigorous proofs of these properties can be found in \cite{streater2000pct}, and involve technical manipulations of Fourier transforms of distributions.
The basic idea, however, is simple.
We can rewrite a general correlation function in terms of the momentum operator as
\begin{equation}
	\langle \Omega | \phi(x_1) \dots \phi(x_n) | \Omega \rangle
	= \langle \Omega | \phi(0) e^{i P_{\mu} (x_2^{\mu} - x_1^{\mu})} \phi(0) \dots e^{i P_{\mu} (x_n^{\mu} - x_{n-1}^{\mu})} \phi(0) | \Omega \rangle
\end{equation}
This is a function on $n$ copies of $\reals^{d+1},$ i.e., it is a complex-valued function of $\reals^{n(d+1)}.$
It can be continued to the complex space $\comps^{n(d+1)}$ by plugging in $n$ complex vectors $z_j^{\mu} = x_j^{\mu} + i y_{j}^{\mu}$.
Naively, we only know how to define the correlation function when each $y_j^{\mu}$ vanishes.
But because the vectors $z_j^{\mu}$ only show up in exponentials of the form
\begin{equation}
	e^{i P_{\mu} (z_{j+1}^{\mu} - z_j^{\mu})}
	= e^{i P_{\mu} (x_{j+1}^{\mu} - x_j^{\mu})} e^{- P_{\mu} (y_{j+1}^{\mu} - y_{j}^{\mu})},
\end{equation}
we expect the correlation function to be well defined whenever each vector $y_{j+1}^{\mu} - y_j^{\mu}$ is future-directed, since in this case each operator $e^{i P_{\mu} (z_{j+1}^{\mu} - z_j^{\mu})}$ is bounded.
Indeed, rigorous arguments show that correlation functions can be continued to any point $(z_{1}^{\mu}, \dots, z_n^{\mu})$ in $\comps^{n(d+1)}$ such that each vector $\text{Im}(z_{j+1}^{\mu} - z_j^{\mu})$ is in the future light cone.
The continued correlation functions are holomorphic in the interior of this domain, and they limit to the standard correlation functions in a distributional sense when the imaginary parts of all vectors are taken to zero.

The analytic properties of correlation functions enforced by the spectrum condition give a precise meaning to the $i\epsilon$ prescription for resolving singularities.
As a concrete example, the two-point function
\begin{equation}
	\langle \Omega | \phi(x) \phi(y) | \Omega \rangle
\end{equation}
has singularities when $x$ and $y$ coincide or are lightlike separated, but the above considerations tell us that if a small past-directed vector is added to the imaginary part of $x,$ or a small future-directed vector is added to the imaginary part of $y,$ then these singularities are resolved, and the two-point function becomes analytic.
In any inertial coordinate frame, this can be accomplished by subtracting $i\epsilon$ from the time component of the point $x,$ or by adding $i\epsilon$ to the time component of the point $y.$

It is actually possible to make stronger statements about analyticity and the spectrum condition by considering the analytic structure of vector-valued functions.
An identical argument to the one given above shows that the function
\begin{equation}
	(x_1, \dots, x_n)
		\mapsto \phi(x_1) \dots \phi(x_n) |\Omega\rangle
\end{equation}
admits an analytic continuation to $(x_1 + i y_1, \dots, x_n + i y_n)$ whenever the vectors $y_1^{\mu}$ and $y_{j+1}^{\mu} - y_{j}^{\mu}$ are all in the future light cone.
Similar statements can be made for a non-vacuum state $|\Psi\rangle$ so long as the support of $|\Psi\rangle$ on large-momentum states is exponentially decaying.
To see this, we write
\begin{equation}
	\phi(x_1) \dots \phi(x_n) |\Psi \rangle
		=  e^{i P_{\mu} x_1^{\mu}} \phi(0) e^{i P_{\mu} (x_2^{\mu} - x_1^{\mu})} \dots e^{i P_{\mu} (x_n^{\mu} - x_{n-1}^{\mu})} \phi(0) e^{- i P_{\mu} x_n^{\mu}} |\Psi\rangle.
\end{equation}
If we replace $x_j^{\mu}$ with $x_j^{\mu} + i y_{j}^{\mu},$ then all of the exponentials except for the last one decay whenever we have $y_1^{\mu}$ and $y_{j+1}^{\mu} - y_{j}^{\mu}$ in the future light cone.
The last exponential will never decay, because we have
\begin{equation}
	y_{n}^{\mu}	
		= (y_{n}^{\mu} - y_{n-1}^{\mu}) + (y_{n-1}^{\mu} - y_{n-2}^{\mu}) + \dots + y_1^{\mu},
\end{equation}
so $y_{n}^{\mu}$ will be in the future light cone, and $e^{- i P_{\mu} (x_n^{\mu} + i y_{n}^{\mu})}$ will be unbounded.
However, if $|\Psi\rangle$ happens to be in the domain of the unbounded operator $e^{P_{\mu} y_n^{\mu}}$, then the vector $\phi(x_1 + i y_1) \dots \phi(x_n + i y_n) |\Psi\rangle$ is well defined and depends holomorphically on $x+iy$.\footnote{See \cite[section 2.4]{Sorce:modular} for a detailed explanation of the relationship between the domain of a family of unbounded operators and its analytic properties.}
So for a general vector $|\Psi\rangle,$ we conclude that the vector $\phi(x_1) \dots \phi(x_n) |\Psi\rangle$ admits an analytic continuation so long as there is an open set of vectors $y_{n}^{\mu}$ in the future light cone such that $|\Psi\rangle$ is in the domain of $e^{P_{\mu} y_{n}^{\mu}}$; in this case, the analytic continuation is defined when $y_{n}^{\mu}$ is in that open set and the vectors $y_{1}^{\mu}$ and $y_{j+1}^{\mu} - y_j^{\mu}$ are all future-directed.

The condition that $|\Psi\rangle$ is in the domain of $e^{P_{\mu} y_n^{\mu}}$ may be interpreted as the statement that $\int d p_{\mu}\, |e^{p_{\mu} y_n^{\mu}} \langle p_{\mu} |\Psi \rangle|^2$ is finite, which we may think of as the statement that the support of $|\Psi\rangle$ on large-momentum states decays exponentially.\footnote{In \cite{Strohmaier:timelike-tube}, states with this property were called ``analytic,'' and were described as states ``from which it is exponentially unlikely to extract an asymptotically large amount of energy.''
We will return to this in section \ref{sec:analytic-states}.}

\subsection{The disk method for analytic continuation}
\label{subsec:disk-method}

For holomorphic functions of a single complex variable, the question of whether a function can be analytically continued is a question about the \textit{function}, not about the domain.
For example, the function $f(z) = \frac{1}{z}$ is analytic in the domain $\comps - \{0\},$ but cannot be analytically continued to all of $\comps.$
More generally, one can show that for any nested pair of domains $\Omega \subsetneq \tilde{\Omega},$ there exists a function analytic in $\Omega$ that has no analytic continuation to $\tilde{\Omega}.$

For $n \geq 2,$ holomorphic functions from $\comps^n$ to $\comps$ have surprising properties that are not present in the case $n=1.$
In particular, there exist pairs of domains $\Omega \subsetneq \tilde{\Omega}$ such that \textit{every} function holomorphic in $\Omega$ admits a unique holomorphic extension to $\tilde{\Omega}.$
This is an extraordinary fact; it means that analytic continuation above one complex dimension is sometimes a problem of geometry, and is independent of the function you choose.
When this happens, the region $\tilde{\Omega}$ is called an \textit{envelope of holomorphy} for $\Omega.$

Many of the fundamental results of axiomatic quantum field theory are obtained by judicious application of techniques related to envelopes of holomorphy.
One begins by knowing that a correlator can be analytically continued from spacetime into a complexified domain, then uses envelope-of-holomorphy techniques to show that it can be extended much further than was initially assumed.
Because an analytic function is uniquely determined by its restriction to an arbitrarily small open set in its domain, the existence of a large domain of analyticity for a correlation function can be used to understand the physics  of quantum fields --- an identity that one wants to prove for correlation functions in spacetime might be much easier to prove in a different corner of the extended domain.

Many envelopes of holomorphy in $\comps^n$ with $n \geq 2$ are found by using a powerful technique called the disk method.
It is sketched heuristically in figure \ref{fig:disk-method}.
The basic idea is that if $\eta : \comps \to \comps^n$ is a holomorphic embedding of the complex plane into a higher-dimensional complex space, then for any holomorphic function $f : \comps^n \to \comps$, the function $f \circ \eta$ is holomorphic, and therefore satisfies the Cauchy integral formula
\begin{equation} \label{eq:disk-method-integral}
	f(\eta(z))
		= \frac{1}{2 \pi i} \int_{|\zeta| = 1} d\zeta\, \frac{f(\eta(\zeta))}{\zeta - z}. 
\end{equation}
There are many ways to embed the disk $|\zeta| \leq 1$ holomorphically in $\comps^n$.
In particular, if $f$ is only holomorphic on some subdomain of $\comps^n$, then it may happen that the image of the circle $|\zeta| = 1$ lies in the domain of $f$, while the image of the interior $|\zeta| < 1$ does not.
When this happens, equation \eqref{eq:disk-method-integral} can be used to \textit{define} a function that could plausibly be an analytic continuation of $f$; in certain circumstances, one can prove that the function so-constructed actually \textit{is} an analytic continuation.
In what follows, I will give a simple example for intuition.
The disk method will be applied to a slightly more complicated scenario in section \ref{sec:future-directed}, with details in appendix \ref{app:axis-extension}.

\begin{figure}
	\centering
	\includegraphics[scale=1.25]{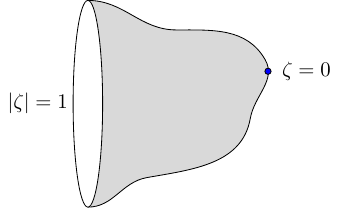}
	\caption{A two-dimensional disk embedded in a complex space $\comps^n.$
	If the embedding is a holomorphic function of the \textit{complex} disk $|\zeta| \leq 1,$ and if $f : \comps^n \to \comps$ is holomorphic, then the value of $f$ at the tip of the disk can be written as an integral over values of $f$ at the disk's edge.
	In some settings, this idea can be used to extend a function defined in a vicinity of the disk's edge all the way into the disk's interior.}
	\label{fig:disk-method}
\end{figure}

To begin, let us see why the disk method does not work in one dimension.
Suppose that $f : \comps - \{0\} \to \comps$ is a holomorphic function defined away from the origin.
We want to determine whether it has an analytic extension $\hat{f} : \comps \to \comps.$
Suppose for a moment that such an extension does exist.
Because $\hat{f}$ is holomorphic, it must satisfy the integral formula
\begin{equation}
	\hat{f}(z)
		= \frac{1}{2 \pi i} \int_{|\zeta| = |z| + \epsilon} d\zeta\, \frac{\hat{f}(\zeta)}{\zeta - z}. 
\end{equation}
The contour over which the integral is taken lies within the domain of $f$, so since $\hat{f}$ is supposed to be an extension of $f$, we may substitute $\hat{f} \to f$ in the integrand and obtain the formula
\begin{equation} \label{eq:fhat-definition}
	\hat{f}(z)
	= \frac{1}{2 \pi i} \int_{|\zeta| = |z| + \epsilon} d\zeta\, \frac{f(\zeta)}{\zeta - z}. 
\end{equation}
This is a constructive formula for $\hat{f}$, which \textit{must} give the analytic extension of $f$ if such an analytic extension exists.
But simply having a formula does not tell us that the extension exists, because we don't actually know if the function $\hat{f}$ defined by this formula is analytic, or if it is an extension of $f.$

It is in fact possible to show that the function $\hat{f}(z)$ defined by equation \eqref{eq:fhat-definition} is analytic.
It would be nice to prove this by differentiating with respect to $z$ and appealing to the theorems that tell us when a derivative and integral can be interchanged; unfortunately, because our contour $|\zeta| = |z| + \epsilon$ has non-analytic dependence on $z,$ this method does not work.
An easy workaround is to go back and remove the explicit $z$-dependence from our contour, for example, by defining a sequence of contours $|\zeta| = n$ and defining $\hat{f}(z)$ for $|z| < n$ using an appropriate contour from the sequence.

So up to checking a few small details, there is no issue with showing that $\widehat{f}(z)$ is holomorphic.
The problem is that it is not always an analytic extension of the original function $f$!
In particular, for $f(z) = \frac{1}{z},$ a straightforward calculation gives $\hat{f}(z) = 0,$ so $f$ and $\hat{f}$ are not equal anywhere in $\comps.$

For $n \geq 2,$ something different happens.
As in the case $n=1$, holomorphic functions in higher dimensions can be expressed as integrals over circles using Cauchy's integral formula; unlike in $n=1,$ however, there are additional dimensions into which those circles can be deformed.
To see how this works, let $f(z_1, z_2)$ be a holomorphic function in $\comps^2 - \{0\}.$
This means that it is continuous and is complex-differentiable in each variable independently.
As in the case $n=1,$ one can show that holomorphy is equivalent to analyticity, i.e., equivalent to the existence of a convergent power series for $f$ around every point in its domain \cite[page 30]{Vladimirov:book}.

Suppose now that $\hat{f}$ is a holomorphic extension of $f$ to all of $\comps^2$.
Because it is holomorphic in each variable, $\hat{f}(z_1, z_2)$ must satisfy Cauchy's integral formula with respect to $z_1$ and $z_2$ separately.
In particular, it must satisfy
\begin{equation}
	\hat{f}(z_1, z_2)
		= \frac{1}{2 \pi i} \int_{|\zeta| = |z_1| + \epsilon} d\zeta\, \frac{\hat{f}(\zeta, z_2)}{\zeta - z_1}.
\end{equation}
As in the case $n=1,$ the contour lies within the domain of $f,$ so we can write
\begin{equation}
	\hat{f}(z_1, z_2)
	= \frac{1}{2 \pi i} \int_{|\zeta| = |z_1| + \epsilon} d\zeta\, \frac{f(\zeta, z_2)}{\zeta - z_1}.
\end{equation}
This gives a constructive formula for the analytic extension $\hat{f}$ under the assumption that it exists.
As in the case $n=1,$ it is possible to show that $\hat{f}$ is holomorphic by removing the non-analytic $z_1$ dependence from the contour and computing derivatives.
Unlike in the case $n=1$, we can show in $n=2$ that $\hat{f}$ is actually an extension of $f$!
The reason is that for $z_2 \neq 0,$ the full image of the disk $|\zeta| \leq |z_1|+\epsilon$ is in the domain of $f.$
Since $f$ is holomorphic, this means that for $z_2 \neq 0,$ we have
\begin{equation}
	f(z_1, z_2)
		=	\frac{1}{2 \pi i} \int_{|\zeta| = |z_1|+\epsilon} d\zeta\, \frac{f(\gamma_1(\zeta), \gamma_2(\zeta))}{\zeta - z_1}
		= \hat{f}(z_1, z_2).
\end{equation}
Since $f$ and $\hat{f}$ agree on the open domain $z_2 \neq 0,$ and since they are both analytic, it follows that they agree on all of $\comps^2 - \{0\}.$

The key point in the above example was that the domain into which we wanted to extend $f$ could be foliated by one-complex-dimensional disks, each of which was holomorphically embedded in $\comps^2.$
Crucially, the boundary of each disk was in the domain of $f,$ and there was at least one member of the family for which the disk's whole interior was in the domain of $f$ as well.
In practice, when applying the disk method to analytically continue a function, the hardest part is showing that the function defined using Cauchy's integral formula is holomorphic.
In the above examples, we accomplished this by making the embedding function for the disk depend analytically on $(z_1, z_2).$
This will suffice for all applications of the disk method in this paper; for completeness, however, I will remark that it is possible to apply the disk method even when an analytically-varying family of disks cannot be explicitly constructed \cite[section 17]{Vladimirov:book}.

To my knowledge, the disk method was discovered in its most basic form by Dyson in \cite{dyson1958connection}.
The first major applications of the disk method were by Vladimirov \cite{Vladimirov:disk} and Borchers \cite{Borchers:tube}.
Many applications of the disk method can be found in \cite[chapter V]{Vladimirov:book}.

\section{Constraints on geometric modular flows}
\label{sec:constraints}

This section concerns necessary conditions for a geometric flow on spacetime to be the modular flow of a state in a quantum field theory.

In section \ref{sec:conformality}, I give a precise definition of what it means for a unitary group $e^{-i K t}$ to act geometrically on a quantum field theory, then prove that any such family of geometric transformations must be conformal.
In particular, if $\xi^a$ is a smooth vector field generating the diffeomorphisms implemented by $e^{- i K t},$ then $\xi^a$ must be a conformal Killing vector field.

In section \ref{sec:future-directed}, I show that in states with sufficiently nice analytic properties, $\xi^a$ must be directed toward the future.

In section \ref{sec:isometric}, I argue that in a non-conformal field theory, geometric modular flow should be isometric, i.e., should implement a conformal transformation with trivial conformal factor.
The arguments of section \ref{sec:isometric} are heuristic, not rigorous, but nevertheless seem worth including here.

\subsection{Geometric unitary flows must be conformal}
\label{sec:conformality}

Let $\M$ be a globally hyperbolic spacetime, and let $\H$ be a Hilbert space describing a sector of a quantum field theory.
As in section \ref{subsec:field-analyticity}, we associate to any smooth, compactly supported smearing function $f$ an unbounded operator
\begin{equation}
	\phi[f]
		= \int d^{d+1} x \sqrt{|g|} \phi(x) f(x).
\end{equation}
We will assume that there exists a dense domain of Hilbert space on which arbitrary products of fields can act; i.e., we will assume that there is a dense subspace of $\H$ that is included in the domain of every product $\phi[f_1] \dots \phi[f_n]$.
This assumption allows us to multiply field operators together, and follows naturally from basic axiomatic assumptions about quantum field theory in general backgrounds --- see e.g. \cite[page 3]{Hollands:axioms}.

Let $U$ be a unitary operator, and let $\psi$ be a diffeomorphism on spacetime.
We say that $U$ acts geometrically, implementing the diffeomorphism $\psi$, if for any smearing function $f$, the operator $U^{\dagger} \phi[f] U$ can be written as
\begin{equation}
	U^{\dagger} \phi[f] U
		= \phi[\tilde{f}],
\end{equation}
where the support of $\tilde{f}$ is the image of the support of $f$ under $\psi$, i.e.,
\begin{equation}
	\supp(\tilde{f})
		= \psi(\supp(f)).
\end{equation}
Note that this is a weaker assumption than assuming that the two functions are related by a pullback; i.e., we are not assuming $\tilde{f}(x) = f(\psi^{-1}(x)).$
This is because if $\phi$ has nontrivial dimension, then we want to allow $U$ to implement a linear transformation on the smearing function in addition to displacing it geometrically.

We will now consider a unitary group $U(t) = e^{- i K t},$ and assume that it implements a one-parameter group of diffeomorphisms $\psi_t.$
The group $\psi_t$ is generated by a vector field $\xi^a.$
In what remains of this subsection, we will show that each $\psi_t$ must be a conformal transformation, so that $\xi^a$ must be a conformal Killing vector field.
We will begin by showing that $\psi_t$ must take spacelike separated points to spacelike separated points.

To prove this, we will make two mild assumptions that should be satisfied in any reasonable quantum field theory.\footnote{As explained in section \ref{sec:assumptions}, we state everything for a theory generated by a single real scalar field; section \ref{sec:assumptions} explains how to upgrade these assumptions for general theories.}
The first is microcausality: if $f$ and $g$ are smearing functions with spacelike separated supports, then $\phi[f]$ and $\phi[g]$ commute.
The second assumption is that for any null-separated points $x$ and $y,$ we have $[\phi(x), \phi(y)] \neq 0.$
More concretely, for any regions $\Omega_1$ and $\Omega_2$ connected by a null curve, there must exist smearing functions $f_1$ and $f_2$ supported in $\Omega_1$ and $\Omega_2$ respectively, with $[\phi[f_1], \phi[f_2]] \neq 0.$

Now fix a value of $t$ and suppose, toward contradiction, that $\psi_t$ does not always take spacelike separated points to spacelike separated points.
I.e., assume there exists a pair of points $x$ and $y$ that are spacelike separated, but such that there is a causal curve connecting $\psi_t(x)$ and $\psi_t(y).$
See figure \ref{fig:microcausality} for a pictorial representation of what follows.
Because $x$ and $y$ are spacelike separated, because $\psi_t(x)$ and $\psi_t(y)$ are causally separated, and because $\psi_t$ depends continuously on $t,$ there must exist some $t_0$ between $0$ and $t$ for which $\psi_{t_0}(x)$ and $\psi_{t_0}(y)$ are null separated.\footnote{More concretely, if $\psi_t(x)$ is in the future of $\psi_t(y)$, then by continuity there exists some $t_0$ for which $\psi_{t_0}(x)$ is on the boundary of the future of $\psi_{t_0}(y),$ and in a globally hyperbolic spacetime, any point on the boundary of the future of another point can be connected to it by a null curve; see e.g. \cite[remark 3.10]{penrose1972techniques}.}
Let $\Omega_1$ and $\Omega_2$ be neighborhoods of $x$ and $y,$ sufficiently small that all points in $\Omega_1$ are spacelike separated from all points in $\Omega_2.$
The open neighborhoods $\psi_{t_0}(\Omega_1)$ and $\psi_{t_0}(\Omega_2)$ can be connected by a null curve, so by our earlier assumption, there exist smearing functions $\tilde{f}_1$ in $\psi_{t_0}(\Omega_1)$ and $\tilde{f}_2$ in $\psi_{t_0}(\Omega_2)$ with $[\phi[\tilde{f}_1], \phi[\tilde{f}_2]] \neq 0.$
But conjugating by $U$ gives $[\phi[f_1], \phi[f_2]] \neq 0$ for a pair of smearing functions $f_1$ and $f_2$ with spacelike separated support, which contradicts microcausality.

\begin{figure}[h]
	\centering
	\makebox[\textwidth][c]{\includegraphics{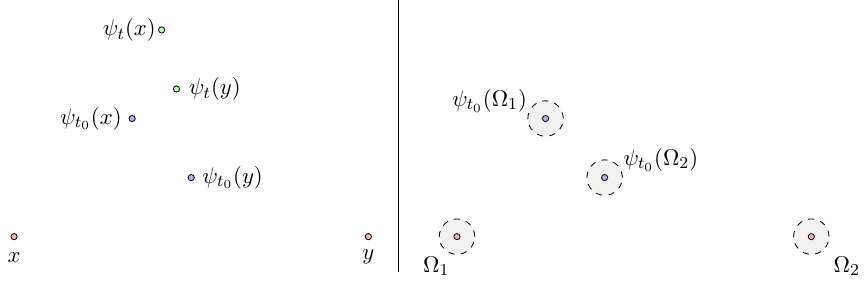}}
	\caption{\textit{Left:} A pair of spacelike separated points $x$ and $y$ that are mapped, via the diffeomorphism $\psi_{t},$ to a pair of causally separated points.
		By moving $t$ continuously toward zero, we can find a parameter $t_0$ such that $\psi_{t_0}(x)$ and $\psi_{t_0}(y)$ are null separated.
	\textit{Right:} Neighborhoods $\Omega_1$ and $\Omega_2$ of $x$ and $y$ that are small enough to be entirely spacelike separated.
	Their images under $\psi_{t_0}$ are null separated, so one can always construct a pair of non-commuting smeared fields in $\psi_{t_0}(\Omega_1)$ and $\psi_{t_0}(\Omega_2).$}
	\label{fig:microcausality}
\end{figure}

It is a general fact --- see e.g. \cite{peleska1984characterization} --- that any diffeomorphism that takes spacelike separated points to spacelike separated points is a conformal transformation.
The proof is easy, so we will give it explicitly here.
\begin{theorem}
	If $\psi$ is a diffeomorphism that takes spacelike separated points to spacelike separated points, then $\psi_* g_{ab}$ and $g_{ab}$ are related by a Weyl factor.
\end{theorem}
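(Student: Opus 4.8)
The plan is to reduce the assertion to a pointwise statement about the derivative $d\psi$ at each point, using the elementary fact that a Lorentzian metric at a point is determined, up to a positive scalar, by its null cone.

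First I would localize the hypothesis. Fix $p \in \M$ and, invoking strong causality (a consequence of global hyperbolicity), choose a geodesically and causally convex neighborhood $W$ of $p$; shrink it so that $\psi(W)$ lies inside such a neighborhood $W'$ of $\psi(p)$. Using causal convexity, a point of $W$ is spacelike separated from $p$ precisely when the connecting geodesic is spacelike, i.e.\ when the world function $\sigma(p,\cdot)$ is positive, and likewise inside $W'$. Since $\psi$ is a diffeomorphism, it carries the open set $\{q \in W : \sigma(p,q) > 0\}$ bijectively onto $\{q' \in \psi(W) : \sigma(\psi(p),q') > 0\}$; taking closures and then boundaries relative to $W$ and $\psi(W)$ — operations a homeomorphism respects — shows that $\psi$ maps the piece of the light cone of $p$ inside $W$ onto the piece of the light cone of $\psi(p)$ inside $\psi(W)$.

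Next I would linearize at $p$. Writing nearby points as $\exp_p(v)$, the point $\exp_p(v)$ lies on the light cone of $p$ iff $g_p(v,v) = 0$, and $\exp_{\psi(p)}^{-1}\!\big(\psi(\exp_p v)\big) = d\psi_p(v) + O(|v|^2)$. Feeding a null $v$ into the previous paragraph, replacing $v$ by $tv$, and sending $t \to 0$ makes the $O(|v|^2)$ term subleading relative to the $t^2$ term, so $g_{\psi(p)}(d\psi_p v, d\psi_p v) = 0$ for every $g_p$-null $v$; running the argument for $\psi^{-1}$ as well, $d\psi_p$ maps the null cone of $g$ at $p$ \emph{onto} the null cone of $g$ at $\psi(p)$. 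Equivalently, $\psi_* g$ and $g$ have the same null cone at $\psi(p)$, hence — as $p$ ranges over $\M$ and $\psi$ is onto — at every point of $\M$. Since a nondegenerate Lorentzian bilinear form on a space of dimension $\geq 2$ is fixed up to a nonzero scalar by its zero cone (work in a null basis, or note that a quadratic vanishing on the Lorentzian null cone is a multiple of the defining form), one gets $\psi_* g = \Omega^2 g$ pointwise for some nowhere-vanishing $\Omega^2$; repeating the limiting argument with a spacelike $v$ — whose image is again spacelike, not timelike — forces $\Omega^2 > 0$. Smoothness of $\Omega^2$ follows from $\Omega^2 = (\psi_* g)(V,V)/g(V,V)$ for any local smooth spacelike vector field $V$, and this $\Omega^2$ is the asserted Weyl factor.

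The step I expect to be the crux is the linearization: on the null cone the leading term $d\psi_p(v)$ and the nonlinear correction $O(|v|^2)$ a priori enter at the same order, and it is the homogeneity-plus-limit manoeuvre that disentangles them. One subtlety worth flagging is that the conclusion genuinely requires the hypothesis to hold for $\psi$ \emph{and} $\psi^{-1}$ (equivalently, that spacelike separation is preserved in both directions): the one-directional statement alone is too weak, since, e.g., $(t,\vec x) \mapsto (t/2,\vec x)$ on Minkowski sends every spacelike-separated pair to a spacelike-separated pair yet is not conformal. In the situation of interest this costs nothing, because $\psi^{-1} = \psi_{-t}$ preserves spacelike separation by the same microcausality argument applied to $\psi_t$.
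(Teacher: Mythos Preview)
Your proof is correct and takes a genuinely different route from the paper's. You argue geometrically: localize via causally convex neighborhoods, show that $\psi$ carries the light cone through $p$ onto the light cone through $\psi(p)$, linearize through the exponential map to conclude that $d\psi_p$ preserves the null cone, and then invoke the standard lemma that a Lorentzian bilinear form is determined up to a nonzero scalar by its null cone. The paper skips the geometric localization, asserting in one line that ``by continuity'' null vectors go to null vectors, and then --- rather than citing the null-cone-determines-metric lemma --- proves proportionality by hand: it picks a null frame $k,\ell,s_1,\dots,s_{d-1}$ for $g$ and verifies, by repeatedly constructing vectors that are null for $\psi_*g$ (hence also null for $g$), that all the orthogonality relations survive and all normalizations rescale by the common factor $-\,\overline{k\cdot\ell}$. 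Your version is more conceptual and closer to how the result is usually packaged in Lorentzian geometry; the paper's is self-contained but buries the idea in frame manipulations.

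Your closing remark is more than a side comment --- it is a genuine correction to the paper. The map $(t,\vec x)\mapsto(t/2,\vec x)$ on Minkowski spacetime really does send every spacelike-separated pair to a spacelike-separated pair while failing to be conformal, so the theorem as stated in the paper (with only the one-directional hypothesis) is false; the ``by continuity'' step only gives that $g$-null vectors map to $g$-spacelike-or-null vectors, not to null vectors, unless one also knows that $\psi^{-1}$ preserves spacelike separation. As you correctly note, in the intended application this is free, since $\psi_t^{-1}=\psi_{-t}$ satisfies the same microcausality hypothesis --- but you are right to insist on it.
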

	\begin{proof}
	The pullback $\psi_*$ on the metric is defined via the pushforward $\psi^*$ on vectors by
	\begin{equation}
		(\psi_* g)_{ab} T^a T^b = g_{ab} (\psi^* T)^{a} (\psi^* T)^{b}.
	\end{equation}
	Since $\psi$ takes spacelike separated points to spacelike separated points, it must by continuity take null vectors to null vectors.
	This means that the metric $(\psi_* g)_{ab}$ is null on exactly the same vectors as $g_{ab}$.
	
	We may choose a local frame for the metric $g_{ab}$ containing two null vectors $k^a, \ell^a$ satisfying
	\begin{equation}
		k^a g_{ab} \ell^a = -1, \qquad k^a g_{ab} k^b = \ell^a g_{ab} \ell^b = 0,
	\end{equation}
	together with $d-1$ spacelike vectors $(s_j)^a$ satisfying
	\begin{equation}
		(s_j)^a g_{ab} (s_k)^b = \delta_{jk}, \qquad \ell^a g_{ab} (s_j)^b = k^a g_{ab} (s_j)^b = 0.
	\end{equation}
	Since the vectors $k$ and $\ell$ are null for the metric $g,$ they are also null for the metric $\psi_* g.$
	To show that $\psi_* g$ is Weyl-equivalent to $g,$ we must show that all of the above orthogonality relations are preserved in $\tilde{g}$, and that the normalization of the basis vectors changes by a constant, i.e., we must show the identities
	\begin{align}
		\ell^a (\psi_*g)_{ab} (s_j)^b
		& = k^a (\psi_* g)_{ab} (s_j)^b = 0, \\
		(s_j)^a (\psi_* g)_{ab} (s_k)^b
			& = - k^a (\psi_* g)_{ab} \ell^b \delta_{jk}.
	\end{align}
	To ease notation going forward, we will use $v \cdot w$ to denote the inner product of $v^a$ and $w^a$ in the metric $g_{ab},$ and $\bar{v \cdot w}$ to denote the inner product of the same vectors in the metric $(\psi_* g)_{ab}.$

	First we will show that the vector $(s_j)^a$ is orthogonal to $\ell^a$ and $k^a$ in the modified metric $(\psi_* g)_{ab}$.
	The vector
	\begin{equation}
		-2 \frac{\bar{s_j \cdot \ell}}{\bar{s_j \cdot s_j}}\, (s_j)^a + \ell^a
	\end{equation}
	is easily seen to be null in the metric $\psi_*g$, so it is null in the metric $g,$ which implies $\bar{s_j \cdot \ell} = 0.$
	A similar argument gives $\bar{s_j \cdot k} = 0.$
	From this, it is easy to compute that for any $(s_j)^a,$ the vector
	\begin{equation}
		(s_j)^a + k^a - \frac{1}{2} \frac{\bar{s_j \cdot s_j}}{\bar{k \cdot \ell}} \, \ell^a
	\end{equation}
	is null in the metric $\psi_* g,$ hence null in the metric $g,$ which gives the normalization condition $\bar{s_j \cdot s_j} = - \bar{k \cdot \ell}.$
	All that remains is to show $\bar{s_j \cdot s_k} = 0$ for $j \neq k,$ which is accomplished by applying the above normalization argument to the linear combination $(s_j)^a + (s_k)^a.$
\end{proof}

So far, we have shown that if the one-parameter unitary group $e^{-i K t}$ implements the one-parameter group of diffeomorphisms $\psi_t,$ then each $\psi_t$ is a conformal transformation.
In particular, this means that if $\psi_t$ is generated by the smooth vector field $\xi^a$, then $\xi^a$ is a conformal Killing vector field:
\begin{equation}
	\Del_a \xi_b + \Del_b \xi_a \propto g_{ab}.
\end{equation}
Because modular flow is a unitary group, the above argument shows that when modular flow is geometrically local, it must implement a conformal symmetry of the spacetime.
This is already a serious restriction on when geometric modular flow can arise, because a generic spacetime will not possess any conformal symmetries at all.

\subsection{Geometric modular flow in analytic states must be future-directed}
\label{sec:future-directed}

We will say that modular flow for the state $|\Psi\rangle$ is geometrically local in the region $A$, with smooth vector field $\xi^a,$ if for any smearing function $f$ supported in $A$, the operator $e^{i K_{\Psi} s} \phi[f] e^{-i K_{\Psi} s}$ can be represented as a field operator with support obtained by pushing the support of $f$ forward along $\xi^a$ for time $s$.
From subsection \ref{sec:conformality}, we know that when modular flow is geometrically local, the vector field $\xi^a$ must be a conformal Killing vector field.

In this subsection, we will show that for states with good analytic properties, geometrically local modular flow must be future-directed.
The basic reason for this is that every quantum field theory comes equipped with local positivity of energy for physical observers in some asymptotic, ultraviolet sense.
Modular flow provides another notion of energy which is similarly locally positive.
If modular flow is not future-directed, then the presence of two ``misaligned'' locally positive energies overconstrains the theory and makes it unphysical.

A proof that geometric modular flow is future-directed was given for the Minkowski vacuum by Trebels \cite{Trebels} in the framework of axiomatic quantum field theory.
In section \ref{sec:vacuum-proof}, I present a modified version of Trebels's proof; the advantage of the modified proof is that it can be generalized.
In section \ref{sec:analytic-states}, I introduce a class of ``weakly analytic'' states, and explain that this class contains the analytic states introduced in \cite{Strohmaier:analytic-states, Strohmaier:timelike-tube} by Strohmaier and Witten.
In section \ref{sec:analytic-proof}, I show that the proof of section \ref{sec:vacuum-proof} applies to any weakly analytic state, so that in these states, geometric modular flow must be directed toward the future.

\subsubsection{Proof in the Minkowski vacuum}
\label{sec:vacuum-proof}

We will work in Minkowski spacetime, $\reals^{d+1},$ and let $x^0$ be an arbitrary inertial time coordinate.
The region $A$ will be a domain of dependence of a partial Cauchy slice, with associated field algebra $\A.$
The Minkowski vacuum is cyclic and separating for $A$ --- see e.g. \cite{Reeh:1961ujh, Witten:notes} --- so there is a modular Hamiltonian $K_{\Omega}$ for the restriction of $|\Omega\rangle$ to $A$.
Assume that the associated modular flow is geometric, and is generated by the smooth vector field $\xi^a.$
Assume, toward contradiction, that there is a point $x \in A$ at which the vector field $\xi^a$ generating modular flow is not future-directed.\footnote{In particular, since the zero vector is technically future-directed, we are assuming that $\xi^a$ is nonzero at $x.$}
See figure \ref{fig:bad-vector-field}.

\begin{figure}[h]
	\centering
	\includegraphics{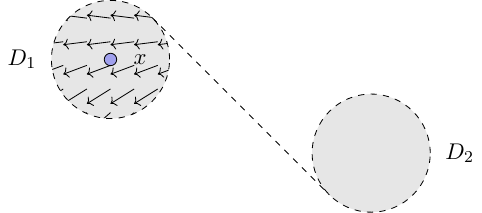}
	\caption{A neighborhood $D_1$ of the point $x$ in which $\xi$ is not future-directed, together with a neighborhood $D_2$ that is spacelike separated from $D_1 + \epsilon \xi$ and null separated from $D_1 - \epsilon \xi.$
	The region $D_1$ is spacelike separated from $D_2 + \epsilon \left( \frac{\del}{\del x^0} \right),$ and null separated from $D_2 - \epsilon \left( \frac{\del}{\del x^0} \right).$}
	\label{fig:bad-vector-field}
\end{figure}

Because $\xi$ is continuous, we can always choose a neighborhood $D_1$ of $x$ that is contained within $A$, and within which $\xi$ is not future-directed, together with a neighborhood $D_2$ that is spacelike separated from $D_1$ and from $D_1 + \epsilon \xi,$ but null separated from $D_1 - \epsilon \xi$; see again figure \ref{fig:bad-vector-field}.
In this setting, it is clear that $D_2 + \epsilon \left( \frac{\del}{\del x^0} \right)$ is spacelike separated from $D_1,$ while  $D_2 - \epsilon \left( \frac{\del}{\del x^0} \right)$ is null separated from $D_1$.

Now let $f_1$ be a smearing function in $D_1,$ and let $f_2$ be a smearing function in $D_2.$
Let $f_{2, t}$ be the smearing function obtained by shifting $f_2$ in the $x^0$ direction by time $t.$
Let $f_{1, s}$ be the smearing function for the operator $e^{i K_{\Omega} s} \phi[f_1] e^{- i K_{\Omega} s}.$
We have
\begin{equation}
	\phi[f_{1, s}] |\Omega \rangle
		= e^{i K_{\Omega} s} \phi[f_1] |\Omega \rangle.
\end{equation}
Since $\phi[f_1]$ is affiliated with $\A,$ we have from section \ref{subsec:modular-flow} that the ket-valued function
\begin{equation}
	s \mapsto \phi[f_{1, s}] |\Omega \rangle
\end{equation}
admits an analytic continuation from real values of $s$ to the strip $0 \leq \text{Im}(s) \leq \frac{1}{2},$ and the bra-valued function
\begin{equation}
	s \mapsto \langle \Omega | \phi[f_{1, s}]
\end{equation}
admits an analytic continuation to the strip $-\frac{1}{2} \leq \text{Im}(s) \leq 0.$
We also have
\begin{equation}
	\phi[f_{2, t}] |\Omega \rangle
		= \int dx\, \phi(x) |\Omega \rangle f(x - t \del_{x_0}),
\end{equation}
and from section \ref{subsec:field-analyticity}, we know that for future-pointing vectors $y,$ we may write this as
\begin{align}
	\begin{split}
	\phi[f_{2, t}] |\Omega \rangle
		& = \lim_{y \to 0} \int dx\, \phi(x + i y) |\Omega \rangle f(x - t \del_{x_0}) \\
		& = \lim_{y \to 0} \int dx\, \phi(x + t \del_{x_0} + i y) |\Omega\rangle f(x),
	\end{split}
\end{align}
with the vector $\phi(x + t \partial_{x_0} + i y) |\Omega \rangle$ depending holomorphically on $(x + t \partial_{x_0}) + i y.$
In particular, for $\tau$ positive, we have
\begin{align}
	\begin{split}
		\phi[f_{2, t}] |\Omega \rangle
		& = \lim_{\tau \to 0} \int dx\, \phi(x + (t + i \tau) \del_{x_0}) |\Omega\rangle f(x),
	\end{split}
\end{align}
which shows that $\phi[f_{2,t}] |\Omega\rangle$ may be extended holomorphically into the domain $\text{Im}(t) \geq 0.$
A similar argument shows that $\langle \Omega | \phi[f_{2, t}]$ has a holomorphic extension into the domain $\text{Im}(t) \leq 0.$

Consider now the correlation functions
\begin{equation} \label{eq:F-on-boundary}
	F(s, t)
		= \langle \Omega | \phi[f_{1, s}] \phi[f_{2, t}] | \Omega \rangle
\end{equation}
and 
\begin{equation}
	G(s, t)
	= \langle \Omega | \phi[f_{2, t}] \phi[f_{1, s}] | \Omega \rangle.
\end{equation}
From the above discussion, $F(s, t)$ is defined in the domain $\text{Im}(t) \geq 0, - \frac{1}{2} \leq \text{Im}(s) \leq 0$.
In the interior of this domain, at any fixed $t$, $F$ is holomorphic in $s,$ and at any fixed $s$, it is holomorphic in $t$; this is exactly the statement that $F$ is a holomorphic function of two complex variables.\footnote{See e.g. \cite[section 15.8]{Vladimirov:book} for a proof that this condition is equivalent to analyticity, i.e., to $F$ admitting a joint power series expansion in $s$ and $t$ within a neighborhood of every point of its domain.}
Similarly, $G(s, t)$ is defined in the domain $\text{Im}(t) \leq 0, 0 \leq \text{Im}(s) \leq \frac{1}{2},$ and is holomorphic in the interior of this domain.
For a sketch of the domains in which $F$ and $G$ are defined, see figure \ref{fig:discontinuity-domain}.

\begin{figure}[h]
	\centering
	\includegraphics{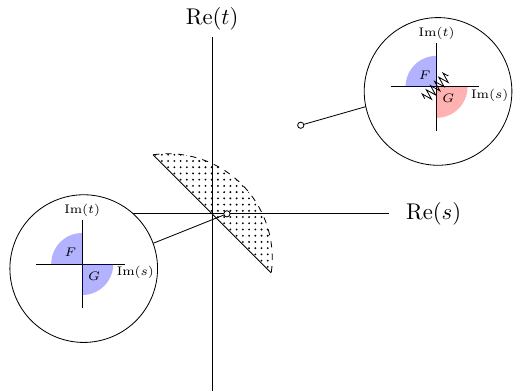}
	\caption{The real $(s, t)$ plane, together with the ``internal'' imaginary directions.
		The functions $F$ and $G$ are each defined and analytic in a portion of a certain ``internal'' quadrant.
		In the dotted region, these functions match continuously on the real $(s, t)$ plane.
		A priori, outside of the dotted region, there is a discontinuous jump from $F$ to $G$.}
	\label{fig:discontinuity-domain}
\end{figure}

It may be helpful to think of $F$ as a function defined in a certain region ``above'' the real $(s, t)$ plane --- see again figure \ref{fig:discontinuity-domain} --- and of $G$ as a function defined in a certain region ``below'' the real $(s, t)$ plane.
A priori, there is a discontinuous jump between $F$ and $G$ as one crosses the real $(s,t)$ plane.
But because of microcausality, there is at least some region in this plane where the two functions meet continuously.\footnote{Strictly speaking one can only assume continuity through a portion of the $(s,t)$ plane in a distributional sense, but there is no need to address this subtlety here, as passing from proofs under the assumption of functional continuity to proofs under the weaker assumption of distributional continuity can be accomplished in a completely standard way  --- see e.g. \cite{rudin1971lectures} or \cite[theorem 2-16]{streater2000pct}.}
For $s > 0,$ there is a small interval in $t$ around zero where $F$ and $G$ are guaranteed to agree; for $t > 0,$ there is a small interval in $s$ around zero where $F$ and $G$ are guaranteed to agree.
The domain of continuity in sketched in figure \ref{fig:discontinuity-domain}.

It turns out that any analytic function with a domain of continuity like the one sketched in figure \ref{fig:discontinuity-domain} is in fact continuous through the real $(s, t)$ plane in a much larger domain, sketched in figure \ref{fig:discontinuity-domain-extended}.
This is a consequence of the incredible power of complex analysis in dimension two.
In particular, we will soon explain how to use the disk method of section \ref{subsec:disk-method} to show that the domain of continuity for $F$ and $G$ extends all the way to a small neighborhood of $\text{Re}(s)=\text{Re}(t)=0,$ and therefore to a small interval around $\text{Re}(s)=0$ on the $\text{Re}(t)=0$ axis.
See again figure \ref{fig:discontinuity-domain-extended}.

\begin{figure}[h]
	\centering
	\includegraphics{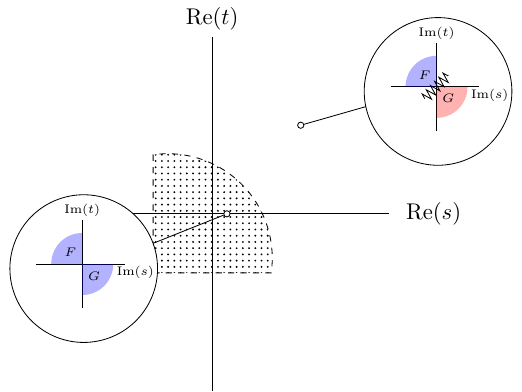}
	\caption{By applying the disk method from section \ref{subsec:disk-method}, one can show that any analytic function having a continuity region as shown in figure \ref{fig:discontinuity-domain} is also continuous through a larger region in the real plane, sketched here.
	To show this, one \textit{constructs} an analytic function that is continuous through this larger region, and shows that it is equal to the original function by matching it with the original analytic function on a portion of its domain.}
	\label{fig:discontinuity-domain-extended}
\end{figure}

Before explaining the argument for the extended domain of continuity, we will first explain why we want to prove such a thing, by showing that continuity of $F$ and $G$ through this extended domain leads to a contradiction with the assumption that modular flow is locally directed along the vector field $\xi^a.$
Continuity in the domain of figure \ref{fig:discontinuity-domain-extended} implies the identity
\begin{equation}
	\langle \Omega | \left( \phi[f_{1,s}] \phi[f_{2}] - \phi[f_{2}] \phi[f_{1,s}] \right) | \Omega \rangle
		= 0
\end{equation}
for $s$ a little bit negative.
This means that there is a commutator of null-separated operators whose vacuum expectation value vanishes.
But this argument was not actually special to the vacuum state $|\Omega\rangle.$
In particular, for any operator $O$ that is spacelike separated from $D_1$ and $D_2,$ we have $\phi[f_{2, t}] O |\Omega \rangle = O \phi[f_{2, t}] |\Omega\rangle$ for sufficiently small $t,$ and $\phi[f_{1, s}] O |\Omega \rangle = O \phi[f_{1, s}] |\Omega\rangle$ for sufficiently small $s,$ so all of the above arguments apply to the correlators
\begin{equation}
	F_{O_1, O_2}(s, t)
	= \langle O_1 \Omega | \phi[f_{1, s}] \phi[f_{2, t}] | O_2 \Omega \rangle
\end{equation}
and 
\begin{equation}
	G_{O_1, O_2}(s, t)
	= \langle O_1 \Omega | \phi[f_{2, t}] \phi[f_{1, s}] | O_2 \Omega \rangle,
\end{equation}
from which we obtain
\begin{equation}
	\langle O_1 \Omega | \left( \phi[f_{1,s}] \phi[f_{2}] - \phi[f_{2}] \phi[f_{1,s}] \right) | O_2 \Omega \rangle
	= 0
\end{equation}
for an interval of negative $s.$
A priori, different choices of $O_1$ and $O_2$ may change the domain of negative $s$ in which this equation holds, but for operators whose spacelike distance from $D_1$ and $D_2$ is bounded below, there will be an operator-independent domain in $s$ in which this equation holds.
Since the vacuum is cyclic for any local region, a dense subspace of Hilbert space can be obtained by acting with operators that are a bounded-below spacelike distance away from $D_1$ and $D_2,$ which implies the operator equation
\begin{equation}
	0 = \phi[f_{1,s}] \phi[f_{2}] - \phi[f_{2}] \phi[f_{1,s}]
\end{equation}
for sufficiently small, negative $s.$
Since this argument holds for any smearing functions $f_1$ and $f_2,$ it shows that $\phi$ commutes with itself in a distributional sense at null separation, which contradicts our basic assumptions about quantum field theory.
Once we have demonstrated the continuity shown in figure \ref{fig:discontinuity-domain-extended}, we will therefore have proven that modular flow in the Minkowski vacuum must be future-directed.

Now that we understand why we should care about the continuity of $F$ and $G$ through the domain in figure \ref{fig:discontinuity-domain-extended}, let us prove it.
We will proceed by constructing an analytic function $H$ which is continuous through the specified domain, then show that $H$ agrees with $F$ and $G$ in portions of their domains; uniqueness of analytic functions then shows that $H$ must agree with $F$ and $G$ everywhere in their domains, which implies that $F$ and $G$ are continuously connected through the region shown in figure \ref{fig:discontinuity-domain-extended}.

The construction of the function $H$ is accomplished using the disk method of section \ref{subsec:disk-method}.
Details are given in appendix \ref{app:axis-extension}.
Let us simply give a quick sketch of why the disk method will apply in this setting.
Recall that to apply the disk method, it is necessary to find, for any $z$ in the domain of extension, an analytically embedded disk in $\comps^2$ such that $z$ is in the interior of the disk and such that the boundary of the disk lies in the original domain in which $F$ and $G$ were defined.
It is also necessary to be able to deform this disk analytically through a family of disks with the same properties, such that eventually one member of the family lies entirely (boundary AND interior) within the domain where $F$ and $G$ were originally defined.

The reason this can be done to move from the domain in figure \ref{fig:discontinuity-domain} to the domain in figure \ref{fig:discontinuity-domain-extended} is a consequence of a general theorem due to Vladimirov called the C-convex hull theorem \cite[section 28]{Vladimirov:book}.
The basic idea behind the theorem is very simple, so we will explain briefly it here, though details are left to appendix \ref{app:axis-extension}.
Given any point in the domain shown in figure \ref{fig:discontinuity-domain-extended}, it is possible to place it on an analytic curve --- i.e., a parametrized curve $s(\lambda), t(\lambda)$ where both functions are analytic in $\lambda$ --- such that the endpoints of the curve lie in the original domain, and such that the tangent vectors to the curve lie entirely in the second and fourth quadrants.
See figure \ref{fig:discontinuity-domain-curve}.

\begin{figure}[h]
	\centering
	\includegraphics{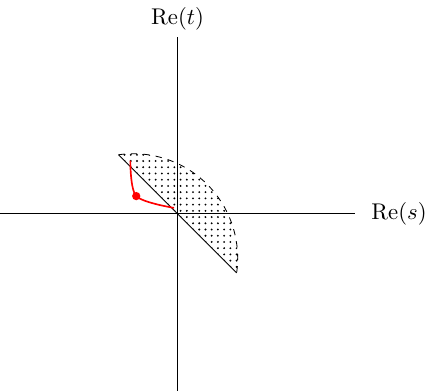}
	\caption{For any point in the dotted domain of figure \ref{fig:discontinuity-domain-extended}, it is possible to draw a curve passing through that point with endpoints in the dotted domain of figure \ref{fig:discontinuity-domain}.
	It is possible to choose this curve so that its coordinates depend analytically on its parametrization, and so that its tangent vectors are all in the second and fourth quadrants.}
	\label{fig:discontinuity-domain-curve}
\end{figure}

Such an analytic curve can be extended to a complex disk simply by replacing the real parameter $\lambda$ with a complex parameter $\zeta.$
Crucially, because the extension is analytic, \textit{the directions in which this disk extends into $\comps^2$ are completely determined by the tangent directions to the curve in the real $(s, t)$ plane.}
Concretely, the Cauchy-Riemann equations give
\begin{equation}
	\frac{\partial s}{\partial \text{Re}(\zeta)}
		= - i \frac{\partial s}{\partial \text{Im}(\zeta)},
\end{equation}
so the behavior of $s$ for small imaginary $\zeta$ is completely determined by its functional dependence on $\lambda.$
A similar statement holds for $t.$
This lets us determine the local tangent plane of the disk $(s(\zeta), t(\zeta))$ in $\comps^2.$
We have
\begin{equation}
	\left( \frac{\partial s}{\partial \text{Im}(\zeta)}, \frac{\partial t}{\partial \text{Im}(\zeta)} \right)
			= i \left( \frac{\partial s}{\partial \text{Re}(\zeta)}, \frac{\partial t}{\partial \text{Re}(\zeta)} \right).
\end{equation}
On the real $(s, t)$ plane, the vector on the right-hand side is the tangent vector to the curve shown in figure \ref{fig:discontinuity-domain-curve}; it follows that when an imaginary part is added to $\zeta,$ the disc extends outward into the imaginary version of the tangent vector to the curve.
But by construction, the tangent vector lies in the second quadrant, and its negative lies in the fourth quadrant, and these are exactly the imaginary directions in $\comps^2$ in which $F$ and $G$ are defined as analytic functions!
It follows that the boundary of the complex disk constructed from the curve in figure \ref{fig:discontinuity-domain-curve} will lie entirely in the domain of definition of $F$ and $G$.
This is exactly what was needed to apply the disk method to construct an analytic function $H$ in the domain of figure \ref{fig:discontinuity-domain-extended} that agrees with $F$ and $G$ everywhere they are defined.
Again, see appendix \ref{app:axis-extension} for a detailed version of this proof.

\subsubsection{Weakly analytic states}
\label{sec:analytic-states}

In the previous subsection, we proved that modular flow must be future-directed in the Minkowski vacuum of any quantum field theory.
But the proof we gave did not actually involve any global properties of the vacuum, such as its invariance under spacetime translations.
All we used was that in a small neighborhood of a point, the function $\phi(x) | |\Omega\rangle$ could be analytically continued into a particular complex direction in $x.$
It was not even important that we could analytically continue arbitrarily far in the complex-$x$ direction --- as we will explain further in section \ref{sec:analytic-proof}, it suffices to have an analytic continuation into an arbitrarily small domain.

Because the proof in the previous subsection was not sensitive to global details, we should expect that if we are sufficiently careful, then the same proof will apply to any state $|\Psi\rangle$ such that $\phi(x) |\Psi\rangle$ locally admits an analytic continuation of the same basic kind as when $|\Psi\rangle$ is the Minkowski vacuum.
In the Minkowski vacuum, these analytic continuations are allowed because of the positivity of global energy, as explained in section \ref{subsec:field-analyticity}.
Also in section \ref{subsec:field-analyticity}, we explained that such analytic continuations are permitted for states that have exponentially decaying support on states of asymptotically large momentum.
While this condition makes reference to a global momentum, and therefore might seem to require the global structure of Minkowski spacetime, the fact that we only care about the state's support on asymptotically \textit{large} momenta suggests that there should be a way to formulate this property even in a general curved background.
After all, we are used to thinking of large momenta as corresponding to short distances, and in short distances, every spacetime starts to look like Minkowski spacetime.

Let us begin with a very general definition of a kind of state for which, in section \ref{sec:analytic-proof}, we will see that the proof of section \ref{sec:vacuum-proof} applies.
We will explore some properties of this definition, then review the analytic states defined by Strohmaier and Witten in \cite{Strohmaier:analytic-states, Strohmaier:timelike-tube}, then finally show that every analytic state is weakly analytic.
The advantage of making this explicit connection is that in \cite{Strohmaier:analytic-states, Strohmaier:timelike-tube}, it was argued that every analytic spacetime admits a dense set of analytic states; since all such states are weakly analytic, the proof of section \ref{sec:analytic-proof} implies that in any such state, local modular flow must be future-directed.

\begin{definition}[Weakly analytic state] \label{def:weakly-analytic}
	The state $|\Psi\rangle$ will said to be \textbf{weakly analytic} if for every spacetime point $p,$ there is a compact neighborhood $S$ of $p$ and a coordinate system $x^{\mu}$ containing $S$ such that the vector-valued function
	\begin{equation}
		x^{\mu} \mapsto \phi(x^{\mu}) | \Psi \rangle
	\end{equation}
	admits an analytic continuation
	\begin{equation}
		z^{\mu} \mapsto \phi(z^{\mu}) | \Psi \rangle
	\end{equation}
	into an open set in the future imaginary light cone of nonvanishing angular width.
	
	In other words, there should exist an open cone $\Gamma$ contained in the future light cone, and an open ball $B$, such that $\phi(x^{\mu} + i y^{\mu}) | \Psi \rangle$ is defined and analytic for $x^{\mu} \in S$ and $y^{\mu} \in \Gamma \cap B.$
	See figure \ref{fig:weakly-analytic}.
\end{definition}

\begin{figure}
	\centering
	\includegraphics{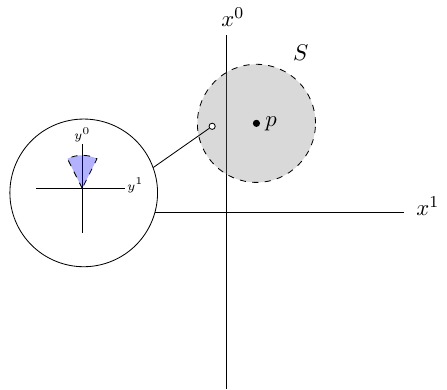}
	\caption{A coordinate system $x^{\mu}$ containing a compact neighborhood $S$ of the point $p.$
	Within this coordinate system, one can unambiguously define the future light cone in the imaginary directions $y^{\mu}.$
	The blue region shown here is a truncated cone in the ``internal,'' imaginary directions, and is contained within the imaginary future light cone.
	In a weakly analytic state $|\Psi\rangle$, every point $p$ has a compact neighborhood $S$ and a coordinate system $x^{\mu}$ such that the function $\phi(x^{\mu} + i y^{\mu}) |\Psi\rangle$ is defined and analytic for $x^{\mu}$ in $S$ and $y^{\mu}$ in such a truncated cone.}
	\label{fig:weakly-analytic}
\end{figure}

We should not expect every state to be weakly analytic, but we emphasize that it is a very weak condition.
We do not even need to be able to define the imaginary future light cone in a covariant way, because we only require that there exists a \textit{single} coordinate neighborhood of any given point in which an analytic continuation into the imaginary future light cone exists, and within any given coordinate system, the notion of an imaginary future light cone is unambiguous.
Furthermore, we only require the analytic continuation to exist for an arbitrarily small radius in some arbitrarily small imaginary solid angle.

In thinking about definition \ref{def:weakly-analytic}, it may be helpful to know that if an appropriate analytic continuation exists in one coordinate system, then it exists in any other coordinate system related to the first one by an analytic transition map.
In other words, suppose that $x^{\mu}$ and $u^{\mu}$ are two coordinate patches containing $p$, and that the transition map can always be written locally as a power series
\begin{equation}
	u^{\mu}(x)
		= \sum_{n_0 \dots n_{d}} a^{\mu}_{n_0 \dots n_{d}} (x^0 - x_*^0)^{n_0} \dots (x^{d} - x_*^{d})^{n_d}.
\end{equation}
Suppose also that $|\Psi\rangle$ is weakly analytic at the point $p$ with respect to the coordinate system $x^{\mu},$ so that there is a compact neighborhood $S$ of $p$ and an analytic extension $\phi(x^{\mu} + i y^{\mu}) | \Psi \rangle$ for $x^{\mu}$ in a compact coordinate set $S_x$ and $y^{\mu}$ in some truncated coordinate cone $\Gamma \cap B.$
Because the transition map from $x^{\mu}$ to $u^{\mu}$ is defined as a power series in a neighborhood of every point, it has a canonical extension to complex values of $x^{\mu}$ with small imaginary part, obtained simply by plugging $x^{\mu} + i y^{\mu}$ into the power series.
The analytic function $\phi(x^{\mu} + i y^{\mu}) | \Psi \rangle$ can be pushed through this map to obtain an analytic function $\phi(u^{\mu} + i v^{\mu}) | \Psi \rangle$ defined in the image of the domain $S_x + i (\Gamma \cap B).$
The trouble is that the image of this domain is not necessarily of the same form, i.e., while we certainly have $u(S_x) = S_u$, with $S_u$ a compact set in the $u$ coordinates, we do not necessarily have $u(S_x + i (\Gamma \cap B)) = S_u + i (\Gamma' \cap B')$ for some truncated cone $\Gamma' \cap B'$ contained within the future light cone.
However, one can very generally show that $u(S_x + i (\Gamma \cap B))$ will \textit{contain} a set of the form $S'_u + i (\Gamma' \cap B')$, where $S'_u$ is a compact neighborhood of $p$.
This follows from the fact that the linear term in the power series for the transition map takes $\Gamma$ into the interior of the future light cone of the $u^{\mu}$ coordinates; by shrinking $\Gamma$ and $B$ we can guarantee that the higher-order terms are not strong enough to make the imaginary part of $u(S_x + i (\Gamma \cap B))$ leave a fixed solid angle in the future light cone of the $u$ coordinates, and by shrinking $S_x$, we can guarantee that the higher-order terms are not strong enough to make the real part of $u(S_x + i (\Gamma \cap B))$ leave $S_u.$
From this, it is clear that $u(S_x + i (\Gamma \cap B))$ contains an open set of points whose projection onto the real space is a compact subset of $S_u,$ and whose projection onto the imaginary space is contained in the future light cone of the $u$ coordinates.
Any point $u_0 + i v_0$ in this set corresponds to a point $x_0 + i y_0$ in $S_x + i (\Gamma \cap B)$; by inverting the transition map and applying a similar ``shrinking'' argument, it isn't so hard to convince yourself that for $v_0$ sufficiently small, the points $u_0 + i \lambda v_0$ map to $S_x + i (\Gamma \cap B)$ for $0 \leq \lambda \leq 1,$ and that this remains true in a small solid angle around $v_0.$
A careful application of this argument shows that there is a truncated cone $\Gamma' \cap B'$ in the $u$ coordinates such that $u(S_x') + i (\Gamma' \cap B')$ is in the domain where the analytic continuation of $\phi(u^{\mu}) | \Psi \rangle$ is defined.

The above considerations tell us that if a state is weakly analytic with respect to one coordinate system, then it is weakly analytic with respect all other analytically related coordinate systems.
In other words, weak analyticity of a state could be defined with respect to a given \textit{analytic structure}, which is a preferred family of coordinate charts with analytic transition maps.
An \textit{analytic spacetime} is a spacetime that comes with such a preferred family of coordinate charts, together with a metric whose components in these charts are analytic functions.
So while in principle a weakly analytic state could exist in any smooth spacetime --- since all that is required is that there exist a single coordinate chart in which an appropriate analytic continuation exists --- it is most natural to consider them in analytic spacetimes, where the condition of weak analyticity will be satisfied or violated in every allowed coordinate system at once. 

Recently, Strohmaier and Witten \cite{Strohmaier:analytic-states, Strohmaier:timelike-tube} have studied a special class of states in analytic spacetimes, called \textit{analytic states}.
The condition for a state to be analytic is formulated using microlocal analysis, following earlier work in \cite{Radzikowski:microlocal, Brunetti:microlocal, Hollands:microlocal, Strohmaier:Hilbert}.
The idea of microlocal analysis is to study the Fourier transform of a function on a manifold in an arbitrarily small neighborhood of a given point.
Since taking a Fourier transform requires choosing a system of coordinates, Fourier transforms on manifolds are ill defined.
There is, however, universal data about the local behavior of a Fourier transform that can be extracted in a coordinate-independent way.
Below I will sketch the big-picture outlines of microlocal analysis, but it is not at all important for the present paper to understand any details; I include the following paragraph only for completeness.\footnote{Much more information about microlocal analysis can be found in \cite[chapter 8]{hormander2003analysis}; see also \cite{Strohmaier:timelike-tube} for an accessible introduction to the basic ideas .}

Given a point $p$ in the domain of a function $f,$ given a coordinate system $x^{\mu}$ containing $p,$ and given a compactly supported smooth function $\chi$ whose support contains $p$ and is contained in this coordinate patch, one can define a Fourier transform
\begin{equation}
	\widehat{\chi f}_{x}(\xi)
		= \int d^{d+1} x\, e^{- i \xi_{\mu} x^{\mu}} \chi(x^{\mu}) f(x^{\mu}).
\end{equation}
If we were to choose a different coordinate system $y^{\mu}$ and a different cutoff function $\eta$, we would find that the Fourier transform $\widehat{\eta f_y}$ is very different from the Fourier transform $\widehat{\chi f_x}$.
But we might expect that there is some data about the asymptotic behavior of the Fourier transform --- i.e., about the large-$\xi_\mu$, small-$\chi$-support behavior of $\widehat{\chi f}_x(\xi)$ --- that is independent of our choice of coordinates and of the details of our cutoff.
The specific information that is traditionally studied in microlocal analysis is the rate of growth of the Fourier transform in the direction $\xi_{\mu}$; any direction in which the Fourier transform decays faster than any rational function is said to be \textit{microlocally smooth}, and one can show (see e.g. \cite[theorem 8.2.4]{hormander2003analysis}) that the set of microlocally smooth directions of a function at a point is well defined as a subset of cotangent space, i.e., that the sets of microlocally smooth directions in two different coordinate systems are related by the usual rules for transforming cotangent vectors.
If the cutoff functions used in a particular coordinate system are analytic instead of compactly supported --- for example, if the cutoff functions are Gaussian ---  then one can restrict further to those directions in momentum space where the Fourier transform decays exponentially in the size of the cutoff function; these directions are said to be \textit{microlocally analytic}, and one can show (with a great deal of work) that the set of microlocally analytic directions at a point is well defined as a cotangent vector so long as one restricts to analytic changes of coordinates.
In other words, the set of microlocally smooth directions at a point is well defined on a smooth manifold, and the set of microlocally analytic directions is well defined on an analytic manifold.  

Many authors --- see e.g. \cite{Radzikowski:microlocal, Brunetti:microlocal, Hollands:microlocal} --- have proposed restrictions on physical quantum field theory states in curved spacetime that take the form of restrictions on which directions in cotangent space are microlocally smooth or microlocally analytic.
These conditions aim to capture some aspect of the global statement in Minkowski spacetime that the energy is nonnegative in every inertial reference frame, which can be interpreted as a restriction on where the Fourier transform of a correlation function is allowed to be nonzero.
In \cite{Strohmaier:analytic-states, Strohmaier:timelike-tube}, Strohmaier and Witten proposed a very weak version of microlocal analyticity that includes all previous proposals, and which suffices to prove interesting, general statements.
Their definition is as follows.\footnote{It is perhaps worth noting that it was shown in \cite[proposition 2.6]{Strohmaier:Hilbert} that conditions on the microlocally analytic directions of the vector $n$-point function $\phi(x_1) \dots \phi(x_n) | \Psi \rangle$ can be expressed in terms of conditions on the microlocally analytic directions of the $2n$-point correlation function $\langle \Psi | \phi(x_1) \dots \phi(x_{2n}) | \Psi\rangle$, so readers uncomfortable with imposing conditions on vector-valued functions can translate everything into statements about correlators if they like.}
\begin{definition}[Analytic state]
	A state $|\Psi\rangle$ of a QFT in a real analytic spacetime is said to be an \textbf{analytic state} if the vector $n$-point function
	\begin{equation}
		\phi(x_1) \dots \phi(x_n) | \Omega \rangle
	\end{equation}
	is microlocally analytic at $(x_1, \dots, x_n)$ in the cotangent direction $(k_1)_a, \dots, (k_n)_a$ whenever the rightmost nonzero $k$ lies outside of the dual cone of the future light cone, i.e., whenever the rightmost nonzero $k$ has $k_a v^a < 0$ for some  $v$ in the future light cone at the corresponding spacetime point.
	
	In particular, this requires that
	\begin{equation}
		\phi(x) |\Psi \rangle
	\end{equation}
	is microlocally analytic at $x$ in the cotangent direction $k_a$ so long as $k^a$ is not past-directed.\footnote{In the notational conventions of Strohmaier and Witten, $k^a$ was required to not be \textit{future}-directed.
	Interestingly, in \cite{Strohmaier:analytic-states}, this was because the ``mostly minuses'' metric signature was chosen, and in \cite{Strohmaier:timelike-tube}, it was because the ``mostly pluses'' metric signature was chosen but a nonstandard sign was chosen for the definition of the Fourier transform.}
\end{definition}

How does this definition of an analytic state in terms of microlocal analysis relate to our definition \ref{def:weakly-analytic} of a weakly analytic state in terms of analytic continuations?
The answer is that there is a very general connection between microlocal smoothness and analytic continuability, and in fact many of the statements about the cotangent directions in which a function is microlocally smooth can be equivalently stated as restrictions on the tangent directions in which the function can be analytically continued.
In particular, the Strohmaier-Witten restriction on the vector one-point function $\phi(x) | \Psi \rangle$ is equivalent to a stronger version of our definition \ref{def:weakly-analytic}.
We state the following theorem without proof; it is a special case of \cite[theorems 8.4.8 and 8.4.15]{hormander2003analysis}.\footnote{While technically the theorems in that reference are stated for scalar functions rather than Hilbert space-valued functions, there is no issue generalizing the proofs to Hilbert space; this sort of generalization was explored at length in \cite{Strohmaier:Hilbert}.}
\begin{theorem} \label{thm:analytic-is-weakly-analytic}
	Let $\M$ be a Lorentzian spacetime with a choice of analytic structure, so that microlocal analyticity is well defined.
	The function $x \mapsto \phi(x) | \Psi \rangle$ is microlocally analytic for all $k_a$ outside the dual cone of the future light cone if and only if for any coordinate system $x^{\mu},$ every compact set $S$ contained in the coordinate patch, and every open cone $\Gamma$ with closure contained in the future light cone, there exists 
	an open ball $B$ and an analytic continuation of $\phi(x) |\Psi \rangle$ to $\phi(x^{\mu} + i y^{\mu}) |\Psi\rangle$ for $x^{\mu}$ in $S$ and $y^{\mu}$ in $\Gamma \cap B.$
\end{theorem}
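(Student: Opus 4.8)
\emph{Proof idea.} The statement is the $\H$-valued, Lorentzian incarnation of the standard dictionary in analytic microlocal analysis between the analytic wave front set of a distribution and its representability as the boundary value of a holomorphic function on a tube domain; the plan is to deduce it directly from \cite[theorems 8.4.8 and 8.4.15]{hormander2003analysis}. Those theorems establish, between them, a two-way correspondence: (i) if the analytic wave front set of a distribution at a point is contained in a proper closed convex cone $V$ of covectors, then near that point the distribution is the boundary value of a function holomorphic in a tube over the interior of the polar cone of $V$; and (ii) conversely, such a holomorphic boundary value --- with the mild polynomial growth near the real boundary that is automatic for a distribution --- has analytic wave front set contained in the polar cone of the tube's imaginary directions. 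Write $V^{+}$ for the closed future light cone at a point and $(V^{+})^{*}$ for its ``dual cone of the future light cone'' in the sense of the statement, so that the hypothesis says precisely that the analytic wave front set of $\phi(\cdot)|\Psi\rangle$ at each point lies in $(V^{+})^{*}$. One preliminary task, carried out once at the outset, is to reconcile H\"ormander's polar-cone and Fourier-transform conventions with the metric and Fourier conventions of this paper; this is routine but genuinely sign-sensitive (this is the source of the sign subtleties noted above in connection with \cite{Strohmaier:analytic-states, Strohmaier:timelike-tube}).

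For the ``only if'' direction I would fix a chart, a compact set $S$, and an open cone $\Gamma$ with $\overline{\Gamma}$ in the interior of $V^{+}$. By hypothesis the analytic wave front set of $\phi(\cdot)|\Psi\rangle$ at each point of a slightly enlarged compact set $S'\supset S$ is contained in the proper closed convex cone $(V^{+})^{*}$, so part (i) of the dictionary furnishes, near each $x_{0}\in S'$, a holomorphic continuation of $\phi(x)|\Psi\rangle$ into a tube whose imaginary directions fill the interior of $V^{+}$ (the polar cone of $(V^{+})^{*}$, by the bipolar theorem); in particular the continuation exists on $x_{0}+i(\Gamma\cap B_{x_{0}})$ for our chosen $\Gamma$ and a small ball $B_{x_{0}}$. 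Covering $S'$ by finitely many such neighborhoods and shrinking to a common cone $\Gamma$ and common radius produces a single ball $B$; the local pieces then glue, because any two are holomorphic on a connected open subset of the tube $\reals^{d+1}+i(\Gamma\cap B)$ and carry the same distributional boundary value there, hence agree by uniqueness of analytic continuation. This yields a single holomorphic function on $S+i(\Gamma\cap B)$ with boundary value $\phi(x)|\Psi\rangle$, as claimed.

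For the ``if'' direction I would run this backward: given the holomorphic continuation on $S+i(\Gamma\cap B)$, part (ii) of the dictionary bounds the analytic wave front set of $\phi(\cdot)|\Psi\rangle$ at each point of $S$ by the polar cone of $\Gamma$. Intersecting this bound over \emph{all} open cones $\Gamma$ whose closure lies in the interior of $V^{+}$ collapses it to the polar cone of the interior of $V^{+}$, which is $(V^{+})^{*}$; hence no covector outside $(V^{+})^{*}$ lies in the analytic wave front set, which is exactly the stated microlocal analyticity.

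Finally, nothing above is special to scalar-valued distributions: as in \cite{Strohmaier:Hilbert} and the footnote above, the proofs of theorems 8.4.8 and 8.4.15 go through verbatim for Banach-space-valued distributions, and alternatively one can reduce to the scalar case by pairing with a dense family of vectors $\langle\Phi|$ --- using that weak holomorphy plus local boundedness gives vector-valued holomorphy and that the analytic wave front set of $\phi(\cdot)|\Psi\rangle$ is the closure of the union of those of the scalar functions $\langle\Phi|\phi(\cdot)|\Psi\rangle$ --- or else trade the vector one-point function for the scalar two-point function via \cite[proposition 2.6]{Strohmaier:Hilbert}. I expect the main obstacle to be organizational rather than conceptual: keeping the cone dualities and Fourier-sign conventions straight, and making the gluing step honestly produce one holomorphic function on the wedge $S+i(\Gamma\cap B)$ rather than a collection of compatible local ones. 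The genuine analysis sits entirely inside the two cited theorems of H\"ormander, which is why the statement is quoted here without proof.
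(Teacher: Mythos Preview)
Your proposal is correct and matches the paper's approach exactly: the paper states this theorem without proof, noting only that it is a special case of \cite[theorems 8.4.8 and 8.4.15]{hormander2003analysis}, which is precisely the route you take. Your elaboration of the cone-duality bookkeeping, the compactness/gluing argument for the ``only if'' direction, and the reduction to the scalar case via \cite{Strohmaier:Hilbert} simply fills in details the paper chose to omit.
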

Our definition \ref{def:weakly-analytic} of a weakly analytic state was motivated by wanting to consider states for which a version of the $i\epsilon$ prescription works outside of Minkowski spacetime.
We said that a state was weakly analytic if the vector one-point function $\phi(x) | \Psi \rangle$ could be made fully analytic by adding a small future-directed imaginary vector to $x$.
Strohmaier and Witten, by contrast, have said that a state is analytic if every cotangent vector with negative overlap with some future-directed tangent vector is a direction of microlocal analyticity.
What we have pointed out in theorem \ref{thm:analytic-is-weakly-analytic} is that every analytic state is weakly analytic.
For our purposes, we are happy to simply assume weak analyticity as a natural condition on states, and prove that modular flow must be future-directed on states satisfying the condition.
It is good to know, however, that this condition is implied by a previously studied condition, and which was argued in \cite{Strohmaier:analytic-states, Strohmaier:timelike-tube} to be generic for quantum fields propagating on an analytic spacetime.

\subsubsection{Proof for weakly analytic states}
\label{sec:analytic-proof}

The proof presented in section \ref{sec:vacuum-proof} for the Minkowski vacuum actually works for any weakly analytic state.
This means that in a weakly analytic state, local modular flow must be future-directed.
Here we spell this out explicitly.

Let $\M$ be a general spacetime, and suppose that $A$ is a spacetime region with associated von Neumann algebra $\A$.
Let $|\Psi\rangle$ be a weakly analytic state (definition \ref{def:weakly-analytic}), and suppose that $|\Psi\rangle$ is cyclic and separating for $\A.$
Let $K_{\Psi}$ be the associated modular Hamiltonian.
Suppose that the modular flow generated by $K_{\Psi}$ is local, and implements the diffeomorphisms generated by a smooth vector field $\xi^a.$
Suppose, toward contradiction, that there is a point $x \in A$ at which $\xi^a$ is not future-directed.

Since $|\Psi\rangle$ is weakly analytic, there exists a compact neighborhood $S$ of $x$ and a coordinate system $u^{\mu}$ containing $S$, together with an open cone $\Gamma$ in the future light cone of the coordinate system $u^{\mu}$, and an open ball $B$, such that there is an analytic function $\phi(u^{\mu} + i v^{\mu})$ for $u^{\mu} \in S$ and $v^{\mu} \in \Gamma \cap B.$
We may always take $S$ to lie within the region $A$ by shrinking it, since this will not affect the existence of the analytic continuation.
Within $S$, one can always pick two domains $D_1$ and $D_2$ as in figure \ref{fig:bad-vector-field}.
The properties of these domains are:
\begin{itemize}
	\item $D_1$ contains $x$.
	\item $\xi^a$ is not future-directed anywhere in $D_1.$
	\item $D_1$ and $D_2$ are spacelike separated, but become null separated if $D_1$ is pulled back slightly along the vector field $\xi^a,$ or if $D_2$ is pulled slightly to the past in any timelike direction.
\end{itemize}
We should also choose $D_1$ and $D_2$ such that their closures are contained within the interior of $S$, so that we do not leave $S$ under small perturbations of the regions $D_1$ and $D_2.$

Because everything is taking place within the region $S$, where the coordinate patch $u^{\mu}$ is defined and where $|\Psi\rangle$ is weakly analytic, there exists a future-directed timelike vector $e^{\mu}$ in $\Gamma$ such that for any point $u_0^{\mu}$ in $S$, the vector-valued function 
\begin{equation}
	\phi(u_0^{\mu} + (t + i \tau) e^{\mu}) | \Psi\rangle
\end{equation}
is analytic in $t + i \tau$ so long as $|t|$ is sufficiently small that $u_0^{\mu} + t e^{\mu}$ remains in $S$, and so long as $\tau \geq 0$ with $\tau$ sufficiently small that $\tau e^{\mu}$ is in the truncated cone $\Gamma \cap B.$
This tells us that if $f_2$ is a smearing function in $D_2,$ and $f_{2, t}$ is the smearing function obtained by pushing $f_2$ forward along $e^{\mu}$ for time $t,$ then the vector-valued function
\begin{equation}
	\phi[f_{2, t}] | \Psi \rangle
\end{equation}
can be analytically continued to a small domain of complex values of $t$ in which $\text{Im}(t)$ is nonnegative.
If $f_1$ is a smearing function in $D_1$, and $f_{1, s}$ is the smearing function obtained via the identification $e^{i K_{\Psi} s} \phi[f_1] e^{-i K_{\Psi} s} = \phi[f_{1, s}],$ then the properties of modular flow discussed in section \ref{subsec:modular-flow} guarantee that the vector-valued function
\begin{equation}
	\phi[f_{1, s}] |\Psi \rangle
\end{equation} 
can be analytically continued to complex values of $s$ for $0 \leq \text{Im}(s) \leq \frac{1}{2}.$
For any operators $a'_1, a'_2$ in $\A',$ the arguments made in \ref{sec:vacuum-proof} then tell us that the function
\begin{equation}
	F_{a'_1, a'_2}(s, t) = \langle a'_1 \Psi | \phi[f_{1, s}] \phi[f_{2, t}] | a'_2 \Psi \rangle 
\end{equation}
can be analytically continued to a domain in the regime $\text{Im}(s) \leq 0, \text{Im}(t) \geq 0$, with this domain being independent of $a_1'$ and $a_2'$.
Similarly, the function
\begin{equation}
	G_{a'_1, a'_2}(s, t) = \langle a'_1 \Psi | \phi[f_{1, t}] \phi[f_{2, s}] | a'_2 \Psi \rangle 
\end{equation}
can be analytically continued to an $(a'_1, a'_2)$-independent domain in the regime $\text{Im}(s) \geq 0, \text{Im}(t) \leq 0.$
Furthermore, these functions agree in a regime like the one sketched in figure \ref{fig:discontinuity-domain}.
By the application of the disk method sketched in section \ref{sec:vacuum-proof} and detailed in appendix \ref{app:axis-extension}, it follows that $F_{a'_1, a'_2}$ and $G_{a'_1, a'_2}$ agree for $t=0$ and $s$ slightly negative.
Because $|\Psi\rangle$ is separating for $\A,$ it is cyclic for $\A',$ and vectors of the form $a' |\Psi \rangle$ are dense in Hilbert space.
This gives
\begin{equation}
	\left[ \phi[f_{1, s}], \phi[f_{2}] \right] = 0
\end{equation}
for an arbitrary pair of smearing functions $f_{1, s}$ and $f_2$ in null-separated regions, contradicting our fundamental assumption about quantum field theory, and proving that local modular flow in weakly analytic states is future-directed.

\subsection{Geometric modular flow in non-conformal theories is probably isometric}
\label{sec:isometric}

In section \ref{sec:conformality}, we saw that geometrically local modular flow must be generated by a conformal Killing vector field $\xi^a.$
Here, we will briefly explore whether this condition is constrained further in a theory that does not possess conformal symmetry.

It is a little difficult to specify exactly what it means for a quantum field theory in a curved background to be conformally invariant.
There is no difficulty for classical field theories, because one can simply declare a classical field theory to be one whose action is invariant under a Weyl rescaling of the metric.
Quantum field theories, however, have anomalies, and these anomalies typically depend on curvature.
What one really wants is for there to be a basis for the fields in the theory --- the so-called ``primary operators'' --- whose correlation functions transform in a predictable way when the metric is rescaled.
Assuming the existence of a stress tensor whose trace generates Weyl transformations, conformality of a field theory is typically described as the statement that the trace of the stress tensor vanishes in flat spacetime, and that in curved backgrounds, the trace is a well behaved functional of local curvature invariants.
What is meant by ``well behaved'' can depend both on spacetime dimension and on the author, but some understanding can be developed by considering metric perturbations around flat spacetime, or by studying Riemannian field theories with a Weyl-covariant path integral description.

What we will point out here is simply that if local modular flow in the region $A$ has a nontrivial conformal factor --- i.e., if it is generated by a conformal Killing vector field $\xi^a$ that fails to be isometric at some point $p$ in $A$ --- then an appropriately chosen stress tensor will have its trace at $p$ be independent of all fields, i.e., dependent only on curvature.
This is a generic feature of conformal theories, and seems like it would be hard to reproduce in any consistent non-conformal quantum theory.

The idea is quite simple.
Suppose that $A$ is a domain of dependence that is kept invariant by the conformal Killing vector field $\xi^a,$ and suppose that $|\Psi\rangle$ is a state whose modular flow in $A$ implements the conformal symmetry generated by $\xi^a.$
Since $\xi^a$ is a conformal Killing vector field, there is a smooth function $\kappa$ on $A$ satisfying
\begin{equation}
	\nabla_a \xi_b + \nabla_b \xi_a = \kappa g_{ab}.
\end{equation}
The role of a stress tensor is to generate local symmetries of fields.
Modular flow is a symmetry, so if it is local, it makes sense to think that it should be generated by a stress tensor.
More plainly, it seems reasonable to expect that there is an appropriate choice of stress tensor $T_{ab}$ such that for any Cauchy slice $\Sigma$ of $A$, the ``operator''
\begin{equation}
	\int_{\Sigma} T_{ab} \xi^a d \Sigma^b
\end{equation}
generates modular flow for fields in $A$, i.e., satisfies an equation like
\begin{equation}
	\left[ \int_{\Sigma} T_{ab} \xi^a d \Sigma^b, \phi(x) \right]
		\sim \mathscr{L}_{\xi} \phi(x).
\end{equation}
Since modular flow is supposed to be local everywhere in $A$, this expression should be formally independent of the chosen slice $\Sigma$.
Using Stokes's theorem to write the difference between two codimension-1 integrals of the stress tensor as a codimension-0 integral of the stress tensor, and using the assumption that the stress tensor is conserved, we obtain the expression
\begin{equation}
	\left[ \int_{B} \kappa (g^{ab} T_{ab}), \phi(x) \right] = 0
\end{equation}
for any codimension-0 spacetime region $B$.
Anywhere that $\kappa$ is nonzero, we conclude that the trace of the stress tensor should commute with all fields.
So if $\kappa$ is nonzero, then at the corresponding point, the stress tensor generating modular flow must be independent of the fields, which suggests that the field theory must be conformal.

\section{Path-integral construction of states with geometric modular flow}
\label{sec:sufficiency-construction}

In the previous section, we gave a few conditions that must be satisfied by any local modular flow: (i) the geometric transformation must be conformal; (ii) if the state is ``weakly analytic,'' then the geometric transformation must be future-directed; (iii) if the theory does not possess conformal symmetry, then the geometric transformation probably needs to be an isometry.
These are all \textit{necessary} conditions on local modular flow; we have no guarantee that they are sufficient.\footnote{In fact, I think there is probably an additional necessary condition that has not yet been established, which is discussed in section \ref{sec:discussion}.}
A question we would like to answer is: given a quantum field theory in a curved spacetime, a domain of dependence $A$, and a vector field $\xi^a$ in $A$ satisfying all the conditions of section \ref{sec:constraints}, does there exist a state whose modular flow implements $\xi^a$?

This question seems quite hard to answer in general, because constructing states in a general quantum field theory is a very difficult task.
The only tool I know of for constructing states in an interacting field theory is the Euclidean path integral.
If $\Sigma$ is a slice of a Lorentzian spacetime, and $\M_R$ is a Riemannian manifold with boundary isometric to $\Sigma,$ then a path integral over $\M_R$ with open boundary conditions constructs a functional of field configurations on $\Sigma$, which one can try to interpret as a state of the Lorentzian theory.
This construction is in fact used to provide a heuristic argument for the modular flow of the Minkowski vacuum in the Rindler wedge being a boost \cite{unruh1984acceleration}.
Basically, one treats the vacuum state $|\Omega\rangle$ as the path integral on a Euclidean half-plane with open boundary conditions --- see figure \ref{fig:unruh-path-integral} --- and represents the ``density matrix'' of this state in the Rindler wedge $A$ as the path integral over a full Euclidean plane with a cut along one radius --- see figure \ref{fig:unruh-path-integral}.
Because the Euclidean plane possesses rotational symmetry, which is a symmetry of the Euclidean field theory, the path integral can be ``re-sliced'' in an angular direction (as in figure \ref{fig:unruh-path-integral}) to provide the heuristic equation
\begin{equation}
	\rho_{A}
		\sim e^{- 2 \pi K},
\end{equation}
with $K$ the generator of Euclidean rotations.
Under Wick rotation, the generator of Euclidean rotations becomes the generator of Lorentz boosts, which suggests that the corresponding boost, appropriately normalized, should be the modular Hamiltonian of $|\Omega\rangle$ in the Rindler wedge.

\begin{figure}
	\centering
	\includegraphics{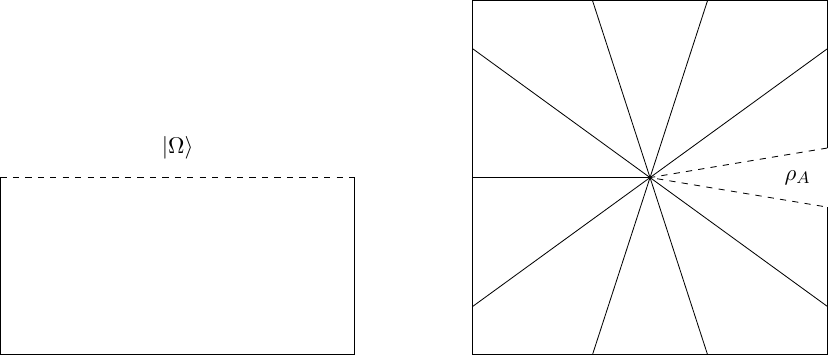}
	\caption{\textit{Left}: The Euclidean path integral preparing the Minkowski vacuum.
	\textit{Right:} The Euclidean path integral preparing the ``density matrix'' of the Minkowski vacuum in the Rindler wedge, re-sliced to show the identity $\rho_{A} \sim e^{- 2 \pi K}$ with $K$ the generator of Euclidean rotations.}
	\label{fig:unruh-path-integral}
\end{figure}

This argument can be made more precise using the work of Bisognano and Wichmann \cite{bisognano1976duality}, where the informal path integral construction of the vacuum state is replaced by theorems concerning the analytic structure of correlation functions in Minkowski spacetime.
In this argument, however, it is still the case that the existence of an analytic continuation to Euclidean signature plays a crucial role.

The path integral argument can be extended to any spacetime with a static Killing horizon that admits an analytic continuation to a complete Riemannian section --- see e.g. \cite{jacobson1994note}.
I will sketch here an argument similar to the ones presented in \cite{jacobson1994note}, which applies in certain analytic spacetimes with hypersurface-orthogonal conformal Killing vector fields.
Because the spacetime metric is presumed analytic, it can be analytically continued from any domain of dependence $A$ into a complex spacetime $A_{\comps}$ that contains $A$ as a slice.
There is a coordinate-independent notion of ``multiplication by $i$'' which takes vectors tangent to $A$ to complex vectors tangent to $A_{\comps}$.
Let $\xi^a$ be a hypersurface-orthogonal conformal Killing vector field in $A$, and let $\Sigma$ be a slice of $A$ that is orthogonal to $\xi^a.$
We can decompose $\xi^a$ uniquely into a vector $\xi_{\Sigma}^a$ that is tangent to $\Sigma,$ and a vector $\xi_{\perp}^a$ that is orthogonal to $\Sigma.$
We can then construct a new vector $\xi_{\text{Riem}} = i \xi_{\perp}^a + \xi_{\Sigma}^a$ that points locally into a Riemannian direction within $\Sigma_\comps.$
For certain well behaved spacetime regions $A$, there will be a whole Riemannian submanifold of $A_{\comps}$ such that (i) $\Sigma$ is the boundary of the submanifold, and (ii) $\xi_{\text{Riem}}^a$ can be extended into a conformal Killing vector field of the submanifold.
Given a path integral formulation of a conformal field theory, the path integral over this Riemannian manifold produces a state of the Lorentzian theory on $\Sigma$, and one can give a heuristic argument like the one in figure \ref{fig:unruh-path-integral} that the resulting state will have $\xi^a$ as its modular flow.
If $\xi^a$ was originally a Killing vector field, and if $\xi_{\text{Riem}}^a$ can be extended to a Riemannian submanifold of $A_{\comps}$ as a Killing vector field, then one expects the above construction to work in every quantum field theory, not just those with conformal symmetry.

The existence of a regular Riemannian manifold whose only boundary is $\Sigma,$ and on which $i \xi^a$ can be extended as a conformal Killing vector field, is obviously very special.
While every analytic metric can be continued into a complex metric, the domain of analytic continuation will generally be very small.
In this setting, it is hard to imagine using path integrals to construct states with local modular flow.
Progress on this general problem will likely need to begin in free theories, where there is better control over the space of states.

\section{Comments and discussion}
\label{sec:discussion}

The main purpose of this paper was to prove two general constraints on the settings in relativistic quantum field theory in which modular flow can be geometrically local.
I proved (i) that geometric modular flow must always implement a conformal symmetry of the background, and (ii) that in certain analytically well behaved states, geometric modular flow must be future-directed.
I also argued (iii) that in non-conformal theories, modular flow must implement an isometry of the spacetime.
Finally, I explained certain special settings in which path integrals can be used to construct states with local modular flow in general interacting theories.

As I see it, there are three open problems that must be solved to declare our understanding of local modular flow complete.
These are (i) the determination of additional necessary conditions for modular flow to be geometric, (ii) a version of the results of section \ref{sec:future-directed} that does not require analyticity, and (iii) a general construction, given a vector field satisfying certain sufficient conditions, of a state whose modular flow implements the corresponding diffeomorphism.
I will address now address each of these in its own subsection.

\subsection{Scaling near the edge}

My guess is that in addition to the three necessary conditions for local modular flow that were discussed in this paper, there is a fourth that restricts the parametrization of the group of diffeomorphisms, or equivalently that restricts the overall scale of the generating vector $\xi^a$.
This restriction should apply in any region $A$ that has a spacelike boundary --- i.e., in any region with an ``edge.''
In other words, if $\xi^a$ is a future-directed conformal Killing vector that preserves such a region, then there should be at most one constant $c$ such that $c \xi^a$ generates the modular flow of a quantum field theory state.
My reason for thinking this is that the modular flow of the Minkowski vacuum in the Rindler wedge is generated by
\begin{equation}
	2 \pi \left[ x \left( \frac{\del}{\del t} \right)^a + t \left( \frac{\del}{\del x}\right)^a \right],
\end{equation}
and the $2\pi$ prefactor is very important.
In the path integral derivation of \cite{unruh1984acceleration}, the factor of $2\pi$ arises from the requirement that the ``Euclidean section'' preparing a state should not have a conical defect.
In the axiomatic derivation of \cite{bisognano1976duality}, the factor of $2\pi$ arises from the relationship between the $\mathsf{CRT}$ operator and the complex Lorentz group.

In local coordinates near a point on the edge of $A$, any timelike conformal Killing vector field that vanishes at that point can be written as some linear combination of $x \del_t$ and $t \del_x$, plus terms that are higher order in $t$ and $x$.
The coefficients of the linear combination are not fixed by conformality, and can depend on tangent directions to the edge.
But due to the physical principle that ``all states should look like the vacuum at short distances,'' it seems like there should be a way to fix these coefficients to $2\pi$.
Unfortunately, a general argument seems elusive.
Any complete argument would almost certainly produce the factor of $2\pi$ by demanding consistency between the analytic continuation of correlators to Euclidean signature, and the analytic continuation of modular flow to complex values of modular time.
The trouble is that for a general conformal Killing vector field in a general region $A$, the analytic continuation to Euclidean signature is hard to control, because even if the vector field itself is controlled as one approaches the edge of $A$, analytic continuations are very sensitive to small perturbations.

It seems very important to understand whether there is a fundamental restriction on the scaling of local modular flow, in part because of an apparent connection to the long-standing but unproven conjecture that modular flow in any quantum field theory state becomes \textit{approximately} local as one approaches the edge of the region in which it is defined.\footnote{This is known to hold in a precise sense for certain regions of Minkowski spacetime thanks to a result by Fredenhagen \cite{Fredenhagen:1984dc}.
See e.g. \cite[section 4.1]{Jensen:2023yxy} for a recent discussion of the general problem.}
While there are good reasons to believe that this statement is true, it is not even clear in what sense the approximation is supposed to hold; understanding the behavior of exactly-local modular flow near an edge seems like it could be an important step in understanding the behavior of approximately-local modular flow in the same regime.

\subsection{Doing away with analyticity}
\label{sec:conclusion-analyticity}

The proof of section \ref{sec:future-directed}, which showed that local modular flow must be future directed, only applied to states $|\Psi\rangle$ satisfying definition \ref{def:weakly-analytic}.
For these ``weakly analytic'' states, there exist local coordinate systems in which the vector $\phi(x) | \Psi \rangle$ can be analytically continued in time.
This condition was needed because the proof hinged on showing that a certain commutator vanishes; we showed that if modular flow is \textit{not} future-directed, then because field operators commute at spacelike separation, there must also exist a pair of null separated points at which field operators commute.
Without analyticity, it is very hard to make an argument that a function should vanish in one regime based on its vanishing somewhere else.

Using a different proof technique, I think it will be possible to prove that local modular flow is future-directed without making such a stringent assumption on the state.
There is a deep relationship between analytic continuation and microlocal smoothness, which is weaker than microlocal analyticity --- see e.g. \cite[theorem 8.1.6]{hormander2003analysis}.
Because modular flow can always be analytically continued, it seems likely that non-future-directed modular flow will violate some basic assumption about the microlocal singularities of correlation functions.
In practice, however, I have found it quite difficult to produce a connection between (i) an analytic continuation of a modular-flowed two-point correlator of smeared operators, and (ii) an analytic continuation of the fundamental two-point distribution of fields.
Sussing out this connection would be an excellent subject for future work.

\subsection{Explicit constructions of states with local modular flow}

It is of considerable interest to know, given a vector field whose flow preserves a spacetime region, whether there is a state in that region whose modular flow implements the vector field.
This paper gave a few necessary conditions that must be satisfied by the vector field, but the settings in which a state with the corresponding modular flow can actually be constructed --- see section \ref{sec:sufficiency-construction} --- were quite limited.
This means that the question ``can this vector field be implemented as a modular flow?'' is highly constrained in that it is easy to answer \textit{no} in many settings, but there are very few settings in which one can answer with a definitive \textit{yes}.
It would be good to develop general tools for constructing states with local modular flow, possibly beginning in free field theory.

\acknowledgments{I thank Stefan Hollands for suggesting that any reasonable microlocal spectrum condition would be inconsistent with spacelike modular flow; I also thank him for being the intended recipient of many emails that I never sent because I figured out the answers while I was writing to ask him.
This work benefited from conversations with Elba Alonso-Monsalve, Daniel Harlow, and Hong Liu, and from communications with Bob Wald.
Financial support was provided by the AFOSR under award number FA9550-19-1-0360, by the DOE Early Career Award number DE-SC0021886, and by the Heising-Simons Foundation.}

\appendix

\section{Contours for extension to an axis}
\label{app:axis-extension}

In section \ref{sec:vacuum-proof}, we encountered a function of two complex variables $s$ and $t$ with the following properties.
\begin{itemize}
	\item The function is defined and holomorphic in the domain $-\frac{1}{2} \leq \text{Im}(s) \leq 0, \text{Im}(t) \geq 0$; in this domain we called the function $F(s, t).$
	\item The function is defined and holomorphic in the domain $0 \leq \text{Im}(s) \leq \frac{1}{2}, \text{Im}(t) \leq 0$; in this domain we called the function $G(s, t).$
	\item The function is continuous in the real $(s, t)$ plane through a domain like the one sketched in figure \ref{fig:discontinuity-domain}, which is an open set contained in the first, second and fourth quadrants and with a boundary curve passing through the origin in the second and fourth quadrants.
\end{itemize}
We claimed that any such function must have a larger domain of continuity through the real $(s, t)$ plane than was originally assumed.
In particular, $F$ and $G$ must meet continuously at any point that can be reached by a curve whose endpoints are in the original domain of continuity, and whose tangent vectors are everywhere in the second or fourth quadrants.
This enlarged domain was sketched in figure \ref{fig:discontinuity-domain-extended}.
We indicated the idea of the proof in section \ref{sec:vacuum-proof}; here we will provide the details.

The global structure of the domains in which $F$ and $G$ are defined is actually not important for us; it does not matter that the imaginary part of $t$ can be taken all the way to $\pm \infty$, or that the imaginary part of $s$ can be taken all the way to $\pm \frac{1}{2}.$
For the sake of generality, we will just assume that $F$ is defined in a small ball in the imaginary second quadrant, and $G$ is defined in a small ball in the imaginary fourth quadrant.
I.e., we will assume that there is a radius $r$ such that $F$ is defined and holomorphic in the domain
\begin{equation}
	B_{F, r}
		= \{ (s, t) | |s|^2 + |t|^2 < r^2\} \cap \{\text{Im}(s) < 0, \text{Im}(t) > 0\},
\end{equation}
and $G$ is defined and holomorphic in the domain
\begin{equation}
	B_{G, r}
	= \{ (s, t) | |s|^2 + |t|^2 < r^2\} \cap \{\text{Im}(s) > 0, \text{Im}(t) < 0\}.
\end{equation}
We will further assume that $F$ and $G$ meet continuously in a small tube around a curve with endpoints in the second and fourth quadrants, and with tangent vectors in the second and fourth quadrants.\footnote{Note that the ``tube'' here plays exactly the same mathematical role as the tube in Borchers's timelike tube theorem \cite{Borchers:tube, Strohmaier:analytic-states, Strohmaier:timelike-tube}; the mathematical argument for the timelike tube theorem is the same as the argument given here for continuity of $F$ and $G$ in an extended domain.}
See figure \ref{fig:appendix-tube-domain}.
Note that all of the assumptions made here are satisfied by the function from section \ref{sec:vacuum-proof}, c.f. figure \ref{fig:discontinuity-domain}.
The assumptions made in this appendix are less restrictive because we only require $F$ and $G$ to be defined for a small open set of values for the real and imaginary parts of $s$ and $t$; this is the setting that is relevant in section \ref{sec:analytic-proof}.

\begin{figure}
	\centering
	\includegraphics{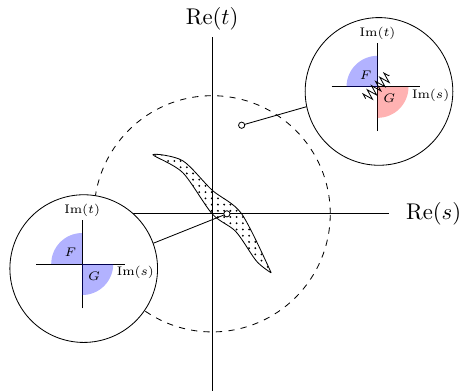}
	\caption{The dashed circle is the circle of radius $r$ in the $(\text{Re}(s), \text{Re}(t))$ plane.
	At each point inside this circle, there is a small open set in the imaginary second quadrant in which the function $F$ is defined, and a small open set in the imaginary fourth quadrant in which the function $G$ is defined.
	These functions are analytic in their domains of definition, and meet continuously at the dotted, ``tube-like'' domain.}
	\label{fig:appendix-tube-domain}
\end{figure}

The ``extended domain'' in which we wish to show that $F$ and $G$ meet continuously consists of all points that can be reached by a curve with endpoints in the tube of continuity and with tangent vectors in the second or fourth quadrants, and such that these points also lie within the ball of radius $r$ around the origin.
We will focus on the portion of this domain lying ``below'' the tube; see figure \ref{fig:appendix-tube-domain-extended}.
To show continuity through the extended domain, we will construct a complex-analytic function $H$ in an open set containing the extended domain, such that $H$ is continuous across the extended domain by construction, and such that $H$ agrees with $F$ and $G$ in certain open subsets of their original domains of definition.
By uniqueness of analytic functions, $F$ and $G$ must be equal to $H$ throughout the extended domain of continuity, and hence equal to each other.

\begin{figure}
	\centering
	\includegraphics{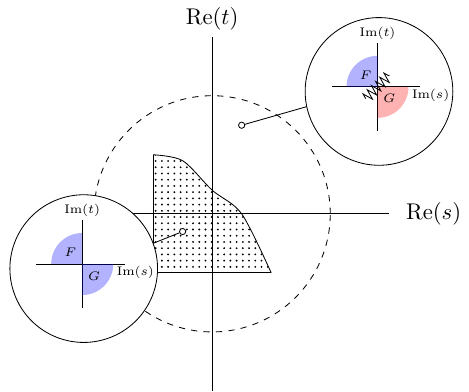}
	\caption{The extended domain of continuity for the functions $F$ and $G$ first shown in figure \ref{fig:appendix-tube-domain}.
	If the circle of radius $r$ were smaller, and the tube were differently shaped, the full ``square-shaped'' dotted domain shown here would need to be truncated at the radius $r.$}
	\label{fig:appendix-tube-domain-extended}
\end{figure}

Let $(s_0, t_0)$ be a point in the extended domain shown in figure \ref{fig:appendix-tube-domain-extended}.
Let $s_{-}$ be the most negative value of $\text{Re}(s)$ achieved in the closure of the tube, and let $s_{+}$ be the most positive value of $\text{Re}(s)$ attained in the tube, so that we have $s_- < s_0 < s_+$.
Similarly, let $t_{-}$ and $t_{+}$ be the appropriate boundaries of the tube in $\text{Re}(t),$ so that we have $t_- < t_0 < t_+.$
See figure \ref{fig:appendix-tube-curve}.
Consider the one-complex-dimensional curve
\begin{align}
	t(s) = \frac{(s_0 - s_{-}) (t_0 - t_-)}{s - s_{-}} + t_-.
\end{align}
A simple calculation shows that the imaginary part of $t$ is given by
\begin{equation}
		\text{Im}(t)
			= - \frac{(s_0 - s_-) (t_0-t_-)}{|s - s_-|^2} \text{Im}(s).
\end{equation}
From this we see that the imaginary part of $t$ is positive for $\text{Im}(s)$ negative, and is negative for $\text{Im}(s)$ positive.

\begin{figure}
	\centering
	\begin{subfigure}[b]{0.48\textwidth}
		\centering
		\includegraphics[width=\textwidth]{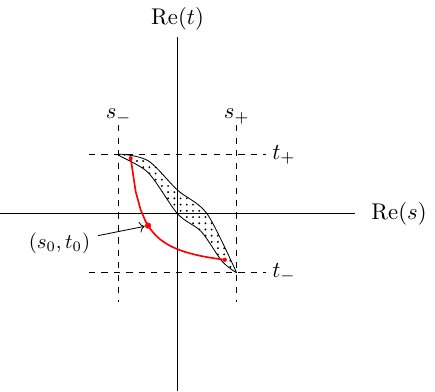}
		\caption{}
		\label{fig:appendix-tube-curve}
	\end{subfigure}
	\hfill
	\begin{subfigure}[b]{0.48\textwidth}
		\centering
		\includegraphics{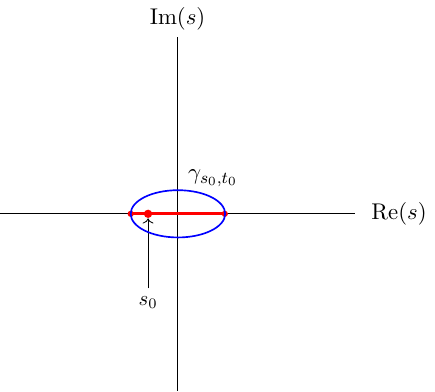}
		\caption{}
		\label{fig:appendix-tube-contour}
	\end{subfigure}
	\caption{(a) The values $s_-, s_+, t_-, t_+$ bounding the tube of continuity in a rectangle, together with a portion of the curve $t = \frac{(t_0 - t_-)(s_0 - s_-)}{s - s_-} + t_-$.
	(b) A contour $\gamma_{s_0, t_0}$ in the complex $s$-plane, containing $s_0,$ that meets the real axis at points where the curve on the left intersects the tube of continuity.}
	\label{fig:tube-curve-and-contour}
\end{figure}

For $s$ real, $t$ goes to infinity as $s$ approaches $s_-,$ so there is some real value of $s$ between $s_-$ and $s_0$ for which $t(s)$ is in the tube of continuity.
There is also clearly some real value of $s$ between $s_0$ and $s_+$ for which $t(s)$ is in the tube of continuity.
See figure \ref{fig:appendix-tube-curve}.
Let $\gamma_{s_0, t_0}$ be a contour in the complex $s$-plane which intersects the real axis at these two points, and which surrounds the point $s_0.$
See figure \ref{fig:appendix-tube-contour}.
For $s$ real and in between these two points of intersection, we clearly have $|t(s)|^2 + |s|^2 < r^2.$
Because $t(s)$ is a continuous function of $s,$ we can always squeeze $\gamma_{s_0, t_0}$ in the imaginary direction to ensure that for every complex number $s$ on the contour $\gamma_{s_0, t_0}$, the bound $|t(s)|^2 + |s|^2 < r^2$ continues to hold.
For any such contour, the full set $(s, t(s))$ for $s \in \gamma_{s_0, t_0}$ is contained in one of the domains $B_{F, r}$ or $B_{G, r}$, except for the two points where $\gamma_{s_0, t_0}$ intersects the real axis, at which $(s, t(s))$ is in the tube of continuity where $F$ and $G$ meet.

Thus far, for every point $(s_0, t_0)$ in the extended domain of continuity, we have demonstrated the existence of a complex disk $t_{s_0, t_0}(s)$ and a contour in the $s$-plane surrounding $s_0$ such that for any $s$ on the contour, the point $(s, t_{s_0, t_0}(s))$ is in the original domain in which $F$ and $G$ were defined.
Because all functions involved depend continuously on $s_0$ and $t_0,$ such a contour will clearly also exist for $(s_0, t_0)$ in a small complex neighborhood of the extended domain of continuity.
Using the disk method of section \ref{subsec:disk-method}, we define a function $H$ in this small complex neighborhood by
\begin{align}
	\begin{split}
	H(s_0, t_0)
		& \equiv \frac{1}{2\pi i} \int_{\text{part of $\gamma_{s_0, t_0}$ lying in $B_{F, r}$}} ds\, \frac{F(s, t_{s_0, t_0}(s))}{s - s_0} \\
		& \qquad + \frac{1}{2\pi i} \int_{\text{part of $\gamma_{s_0, t_0}$ lying in $B_{G, r}$}} ds\, \frac{G(s, t_{s_0, t_0}(s))}{s - s_0}.
	\end{split}
\end{align}
Because $F$ is holomorphic in $B_{F, r},$ and $G$ is holomorphic in $B_{G, r},$ the value of $H(s_0, t_0)$ is unchanged by a small deformation of the contour $\gamma_{s_0, t_0}.$
This means that for any fixed $(s_0, t_0),$ we can pick a small neighborhood of that point in which fixed contour can be used, independent of any specific point in that neighborhood.
The resulting expression for $H(s_0, t_0)$ depends analytically on the points in that neighborhood, and taking complex derivatives shows that $H(s_0, t_0)$ is an analytic function.
All that remains is to show that $H$ agrees with $F$ on a portion of its domain, and with $G$ on a portion of its domain.

To show this, we note that for $(s_0, t_0)$ in the tube of continuity, the full interior of the contour $\gamma_{s_0, t_0}$ is contained in $B_{F, r}, B_{G, r}$ or the tube of continuity.
See figure \ref{fig:appendix-contour-interior}.
The value of $F$ at $(s_0, t_0)$ can be obtained as a limit of points in the portion of the contour whose image lies in $B_{F, r},$ and for any such point $\tilde{s}_0,$ we have (see figure \ref{fig:appendix-interior-perturbation})
\begin{align}
	\begin{split}
		F(\tilde{s}_0, t_{s_0, t_0}(\tilde{s}_0))
		& \equiv \frac{1}{2\pi i} \int_{\text{part of $\gamma_{s_0, t_0}$ lying in $B_{F, r}$}} ds\, \frac{F(s, t_{s_0, t_0}(s))}{s - \tilde{s}_0} \\
		& \qquad + \frac{1}{2\pi i} \int_{\text{real segment closing off the contour}} ds\, \frac{F(s, t_{s_0, t_0}(s))}{s - \tilde{s}_0},
	\end{split}
\end{align}
together with
\begin{align}
	\begin{split}
		0
		& \equiv \frac{1}{2\pi i} \int_{\text{part of $\gamma_{s_0, t_0}$ lying in $B_{G, r}$}} ds\, \frac{G(s, t_{s_0, t_0}(s))}{s - \tilde{s}_0} \\
		& \qquad + \frac{1}{2\pi i} \int_{\text{real segment closing off the contour}} ds\, \frac{G(s, t_{s_0, t_0}(s))}{s - \tilde{s}_0}.
	\end{split}
\end{align}
Adding these together gives $F(\tilde{s}_0, t_{s_0, t_0}(\tilde{s}_0)) = H(\tilde{s}_0, t_{s_0, t_0}(\tilde{s}_0))$, and taking limits gives $F(s_0, t_0) = H(s_0, t_0).$
An identical argument gives $G(s_0, t_0) = H(s_0, t_0).$
We therefore have that $G$ and $F$ agree with $H$ on a portion of their domains, which consequently implies that they agree everywhere; since $H$ was continuous through the extended real domain by construction, $F$ and $G$ must agree continuously on that domain.
This is what we wanted to prove.

\begin{figure}
	\centering
	\begin{subfigure}[b]{0.48\textwidth}
		\centering
		\includegraphics[width=\textwidth]{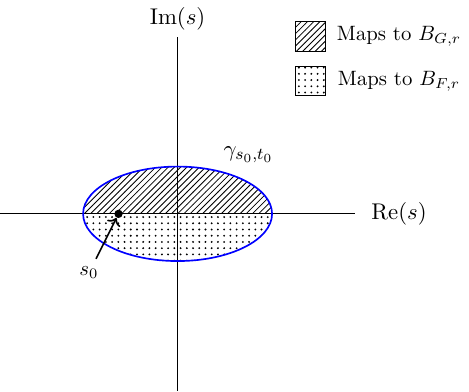}
		\caption{}
		\label{fig:appendix-contour-interior}
	\end{subfigure}
	\hfill
	\begin{subfigure}[b]{0.48\textwidth}
		\centering
		\includegraphics{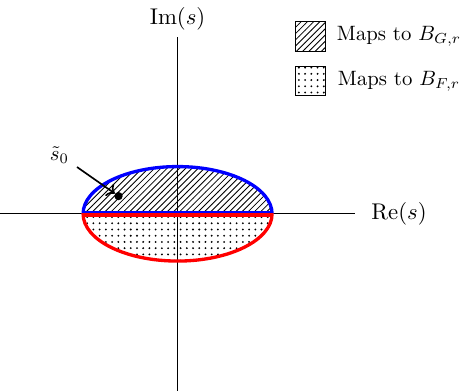}
		\caption{}
		\label{fig:appendix-interior-perturbation}
	\end{subfigure}
	\caption{(a) When $(s_0, t_0)$ is in the domain of continuity, every point in the interior of $\gamma_{s_0, t_0}$ maps into the domain of $F$, the domain of $G$, or the tube where the two functions meet continuously.
	(b) When $\tilde{s}_0$ is moved above the real axis, it lies in the portion of the contour's interior that maps into $B_{F, r}.$
	The contour integral for $H$ can then be split up into a piece surrounding $\tilde{s}_0$ that evaluates to $F$, and a piece that integrates to zero.}
	\label{fig:edge-of-the-wedge}
\end{figure}

\bibliographystyle{JHEP}
\bibliography{bibliography}

\end{document}